\documentclass[journal]{IEEEtranTCOM}

\usepackage{amsmath, amsthm, amsfonts, amssymb, amsbsy}

\usepackage{graphicx}

\usepackage{pstricks}
\usepackage{algorithm}
\usepackage{multirow}
\usepackage{algorithmic}
\usepackage{bm} 
\usepackage{cite}
\usepackage{url} 
\usepackage{enumerate}
\usepackage{float}
\usepackage{amsmath} 
\usepackage{amssymb}
\usepackage{mathrsfs}
\usepackage{lipsum} 
\usepackage{amsmath,epsfig,amssymb,subfigure,bm,dsfont}
\usepackage{multicol} 
\usepackage{booktabs}
\usepackage{threeparttable}
\usepackage{color}
\usepackage{xcolor}
\usepackage{soul}

\newtheorem{proposition}{Proposition}

\newtheorem{corollary}{Corollary}
\newtheorem{lemma}{Lemma}

\newcommand{\twotriangle}{\hfill $\bigtriangleup \bigtriangleup$  }
\newcommand{\eax}{\twotriangle  \end{example}}
\newcommand\bim{\begin{itemize}}
\newcommand\eim{\end{itemize}}

\begin{document}

\title{
	{An Anti-Interference AFDM System: Interference Impacts Analyses and Parameter Optimization}
}

\author{
	Peng Yuan, Zulin Wang, {\it Member, IEEE}, Tao Luo and Yuanhan Ni, {\it Member, IEEE}
	\thanks{Part of this paper has been accepted by the IEEE 2024 Global Communications Conference (GLOBECOM) Workshops \cite{yuanAFDM}. (corresponding author: Yuanhan Ni.)}
	\thanks{P. Yuan, Z. Wang, T. Luo, Y. Ni are with the School of Electronic and Information Engineering, Beihang University, Beijing 100191, China (e-mail: yuanpeng9208@buaa.edu.cn; wzulin@buaa.edu.cn; luotao22@buaa.edu.cn; yuanhanni@buaa.edu.cn).}
}

\maketitle

\begin{abstract}
	
	This paper proposes an anti-interference affine frequency division multiplexing (AFDM) system to ensure reliability and resource efficiency under malicious high-power interference originating from adversarial devices in high-mobility scenarios.
	Closed-form expressions of interferences in the discrete affine Fourier transform (DAFT) domain are derived by utilizing the stationary phase principle and the Affine Fourier transform convolution theorem, which indicates that interference impacts can be classified into stationary and non-stationary categories.
	On this basis, we reveal the analytical relationship between packet throughput and the paramerters of spread spectrum and error correction coding in our proposed anti-interference system, which enables the design of a parameter optimization algorithm that maximizes packet throughput.  
	For reception, by jointly utilizing the autocorrelation function of spreading sequence and the cyclic-shift property of AFDM input-output relation, we design a linear-complexity correlation-based DAFT domain detector (CDD) capable of achieving full diversity gain, which performs correlation-based equalization to avoid matrix inversion.
	Numerical results validate the accuracy of the derived closed-form expressions and verify that the proposed anti-interference AFDM system could achieve high packet throughput under interference in high-mobility scenarios.
	
\end{abstract}

\begin{IEEEkeywords}
	Anti-interference, affine frequency division multiplexing, interference impact analyses, packet throughput.
\end{IEEEkeywords}

\section{Introduction}

Next-generation mobile communication systems (such as beyond 5G and 6G) have been envisioned to show high reliability and resource efficiency in high-mobility scenarios (e.g., flying vehicles and low-earth-orbit satellites) \cite{IMT2021White,Mobility}. 
{{Meanwhile, communication systems are increasingly exposed to malicious interference emitted by external interference devices, in which adversaries intentionally transmit disruptive signals \cite{amuru2015optimal}.
Such malicious interference can impair legitimate communication and cause severe degradation in link reliability and resource utilization, and typical forms include tone interference, sweeping interference, broadband interference, and narrowband interference \cite{Jamming1}.
Consequently, it is essential to develop communication systems that show robustness in doubly selective channels caused by high-mobility while effectively resisting interference originating from adversarial interference sources \cite{AFDM_TWC,Jamming2}.}}

To cope with high-mobility scenarios, various advanced modulation schemes have been developed to achieve full diversity in doubly selective channels.
While orthogonal frequency division multiplexing (OFDM) deployed in 4G and 5G achieves high spectral efficiency \cite{3GPP}, OFDM is unable to achieve full diversity in frequency selective channels \cite{hadani2017orthogonal}. 
{{To address this limitation, orthogonal chirp-division multiplexing (OCDM) is proposed in\cite{Ouyang2016Orthogonal} based on the discrete Fresnel transform. By spreading each information symbol across the entire bandwidth, OCDM is able to exploit full frequency diversity in frequency-selective channels, thereby outperforming OFDM. However, OCDM fails to achieve full diversity in time-selective channels, which limits its performance in high-mobility scenarios \cite{Omar2016Performance}.
To further enhance diversity in doubly selective channels, orthogonal time–frequency space (OTFS) modulation is proposed in \cite{hadani2017orthogonal} by multiplexing information in the delay-Doppler (D-D) domain to combat the dynamics of doubly selective channels \cite{raviteja2019effective}. Nevertheless, OTFS suffers from excessive channel estimation overhead due to two-dimensional pilot guard structure, which limits its practical efficiency.}}

{{To deal with the drawbacks of OCDM and OTFS, affine frequency-division multiplexing (AFDM) was recently proposed based on the discrete affine Fourier transform (DAFT) \cite{AFDM_propose}. AFDM provides comparable communication performance in terms of bit error rate (BER) comparable to OTFS while requiring less channel estimation overhead \cite{yin2024diagonally,ni2025ambiguity}, making it a promising candidate for scheme in high-mobility scenarios \cite{luo2024afdm,ni2025integrated}.
Whereas, existing AFDM detectors either rely on maximum likelihood (ML) detection with prohibitive computational complexity \cite{AFDM_TWC} or minimum mean-square error (MMSE) detection with cubic complexity \cite{AFDM_DSE}, limiting their practicality. Furthermore, the reliability of AFDM in the presence of malicious interference generated by adversarial devices remains underexplored, as the interference impact in the DAFT domain have yet to be formally analyzed.}}

{{Anti-interference strategies typically integrate spread spectrum (SS), error-correction coding (ECC), and interleaving to mitigate interference and burst errors \cite{forouzan2002performance}. However, the effectiveness of anti-interference strategies depends critically on parameter optimization, particularly in hostile scenarios \cite{Jamming2}.}} In \cite{matolak2006spectrally}, the parameters are appropriately selected to optimal spectrum efficiency of mutiple carriers under interference, while the impulse-like autocorrelation function of the spreading sequence is employed to combat multipath.
In \cite{peng2019anti}, the signal-to-interference ratio (SIR) is improved under a fixed weighted cost by Q-learning.
In \cite{im2022anti}, an optimal hopping cost is derived for multi-band ad hoc networks under the constraint of a fixed communication rate.
Nevertheless, the failure to balance resource efficiency and anti-interference capability limits the overall performance of anti-interference systems.

To balance anti-interference capability and resource efficiency, packet throughput is widely used as a guiding metric for parameter optimization, as it provides a realistic indicator of both aspects \cite{zhang2023packet}.
{{In \cite{nardelli2013throughput}, the relationship between packet throughput and the ECC coding rate in additive white Gaussian noise (AWGN) channels is derived, enabling throughput-oriented optimization of the coding rate.
Nevertheless, the applicability of AWGN-based insights diminishes in doubly selective channels arising from high mobility.
Although AFDM can achieve high communication performance under interference-free high-mobility conditions, the optimal design of DAFT domain parameters under malicious interference and doubly selective fading remains unclear due to the insufficient characterization of interference impact in the DAFT domain.
This motivates a rigorous analysis of interference impact in the DAFT domain and a corresponding parameter optimization algorithm to ensure reliable AFDM performance in adversarial and high mobility scenarios.}}

In this paper, we propose an anti-interference AFDM system to ensure reliability and resource efficiency under malicious interference in high-mobility scenarios.
{{We theoretically analyze the closed-form expressions characterizing how malicious interference impact in the DAFT domain.
On this basis, we establish an explicit relationship between packet throughput and the parameters of spread spectrum and error correction coding in doubly selective channels.
To maximize packet throughput, we develop a parameter optimization algorithm that determines the optimal spreading sequence length under given ECC settings.}}
Furthermore, a linear-complexity symbol detection method is designed for the proposed anti-interference AFDM system which could achieve full diversity gain.
Numerical results validate the accuracy of the derived closed-form expressions and verify superior packet throughput of the proposed system, compared to AFDM, OTFS, and OFDM. The main contributions of this work are summarized as follows:
\begin{itemize}
	{ \item  We derive closed-form expressions of interferences in the DAFT domain that intuitively reveal interference impact.
		We observe that the intractable finite quadratic exponential summation in the DAFT fails to intuitively reveal interference impacts and thus {{hinders}} anti-interference system design.
		To address this challenge, we propose an analysis method by utilizing the AFT convolution theorem and the stationary phase principle to derive closed-form expressions that intuitively reveal the impacts.
		
		\item We establish the analytical relationship between packet throughput and the system {{parameters}}, which enables the design of a parameter optimization algorithm that maximizes packet throughput. 
		We analytically derive the relationship between packet throughput and the SS and ECC {{parameters}} under various types of interference and doubly selective channel conditions, in contrast to the AWGN channel assumption in \cite{nardelli2013throughput}. Accordingly, {{we develop an optimization algorithm that determines the optimal spreading sequence length for fixed ECC parameters, where the nonlinear objective is efficiently solved using Newton's method.}}
		
		\item We develop a linear-complexity correlation-based DAFT domain detector (CDD) for the proposed anti-interference AFDM system, which could achieve full diversity gain.
		By jointly utilizing the impulse-like autocorrelation function of {{the}} spreading sequence and the cyclic-shift property of AFDM input-output relation, CDD performs a correlation-based operation to avoid matrix inversion, thereby reducing the complexity. {{Both}} theoretical analysis and simulation results indicate that the proposed CDD achieves full diversity with linear complexity.}	
\end{itemize} 

The rest of this paper is organized as follows. In Section II, the signal model relevant to the proposed anti-interference AFDM system is stated. Theoretical analyses of typical malicious interferences in the DAFT domain are derived in Section III. In Section IV, the analytical relationship between packet throughput and the system {{parameters}} and a parameter optimization method are proposed. A linear-complexity symbol detection method is introduced in Section V. Numerical results are given in Section VI. Finally, Section VII concludes the paper.

\textit{Notations:}
$a$, $\bf{a}$ and $\bf{A}$ represent scalar, vector, and matrix, respectively.
The blackboard bold letters $\mathbb{C}$, $\mathbb{Z}$, $\mathbb{N}$, $\mathbb{E}$, and $\mathbb{V}$ denote the complex number field, the integer field, the  natural number field, the expectation operator, and the variance operator, respectively.
${\bf{I}}_N$ denotes the $N$-dimensional identity matrix.
${\left(  \cdot  \right)^{\rm{H}}}$, ${\left(  \cdot  \right)^{\rm{T}}}$, and ${\left(  \cdot  \right)^*}$ represent the Hermitian, the transpose and the conjugate operations, respectively.
$\odot$ and $\otimes$ denote the Hadamard product operator and the Kronecker product operator, respectively. ${\left\langle {\cdot} \right\rangle _N}$ denotes the modulus operation with respect to $N$. ${ \binom{n}{r}}$ is the number of combinations of $n$ items taken $r$ at a time. $\left\lfloor {\cdot} \right\rfloor $ and ${\left\lceil {\cdot} \right\rceil }$ are the floor function and the ceil function, respectively. $\delta \left(  \cdot  \right)$ and $Q\left(\cdot \right)$ denote the Dirac delta function and the right-tail function of the standard normal distribution, respectively. $U\left(a,b\right)$ denotes a uniform distribution over the interval $\left[a,b\right]$, while $\Gamma\left(\alpha,\beta\right) $ represents a Gamma distribution with shape parameter $\alpha$ and scale parameter $\beta$.

\section{Preliminaries}
To describe the proposed anti-interference AFDM system clearly, we state AFDM fundamentals, typical malicious interference model, and {{the}} performance metric in this section.
\subsection{AFDM Fundamentals}

The Affine Fourier transform (AFT) and the DAFT form the basis of AFDM \cite{AFDM_TWC}. The AFT is a four-parameter ${\left(a,b,c,d\right)}$ class of linear integral transform defined as \cite{LCT}
\begin{align}\label{eq:LCT}
\vspace*{-2pt} 
L\left( u \right) =
\begin {cases}
{\frac{1}{{\sqrt {2\pi \left| b \right|} }}\int_{ - \infty }^\infty  {s\left( t \right){K_{a,b,c,d}}\left( {t,u} \right)dt} } \quad &{b} \ne 0\\
{\frac{1}{{\sqrt {\left| a \right|} }}s\left( {du} \right){e^{ - j\left( {\frac{{cd}}{2}{u^2}} \right)}}} \quad &{b} = 0\\
\end{cases},
\vspace*{-2pt} 
\end{align}
where ${K_{a,b,c,d}}\left( {t,u} \right)$ is the transform kernel given by
\begin{equation}
\vspace*{-2pt} 
{
	\label{eq:KA}
	{K_{a,b,c,d}}\left( {t,u} \right) = {e^{ - j\left( {\frac{{a{u^2} + 2ut + d{t^2}}}{{2b}}} \right)}}\text{.} }
\vspace*{-3pt} 
\end{equation}

To derive the DAFT, $s\left(t\right)$ and $L\left(u\right)$ are sampled by the interval $\Delta t$ and $\Delta u$ as
\begin{equation}
\vspace*{-2pt} 
{
	\label{eq:AFT_sam}
	s\left( n \right) = s\left( t \right){|_{t = n\Delta t}},S\left( m \right) = L\left( u \right){|_{u = m\Delta u}}\text{,} }
\vspace*{-3pt} 
\end{equation}
where $m=0,\ldots,N-1,$ and $n=0,\ldots,N-1$.
To ensure the converted AFT to be reversible, the condition
\begin{equation}
\vspace*{-2pt} 
{
	\label{eq:DAFTcondition}
	\Delta t\Delta u = \frac{{2\pi \left| b \right|}}{N} \text{,} }
\vspace*{-2pt} 
\end{equation}
should hold. Let ${c_1} = \frac{d}{{4\pi b}}\Delta {t^2}$ and ${c_2} = \frac{a}{{4\pi b}}\Delta {u^2}$, the DAFT is defined as
\begin{equation}
\vspace*{-2pt} 
{
	\label{eq:DAFT}
	S\left( m \right) = \frac{1}{{\sqrt N }}{e^{ - j2\pi {c_2}{m^{2}}}}\sum\limits_{n = 0}^{N - 1} {s\left( n \right){e^{ - j2\pi \left( {\frac{{mn}}{N} + {c_1}{n^2}} \right)}}} \text{.} }
\vspace*{-2pt} 
\end{equation}
And the inverse DAFT (IDAFT) is
\begin{equation}
\vspace*{-2pt} 
{
	\label{eq:IDAFT}
	s\left( n \right) = \frac{1}{{\sqrt N }}{e^{j2\pi {c_1}{n^2}}}\sum\limits_{m = 0}^{N - 1} {S\left( m \right){e^{j2\pi \left( {\frac{{mn}}{N} + {c_2}{m^2}} \right)}}}  \text{.} }
\vspace*{-2pt} 
\end{equation}

\begin{figure}
	\centering
	\includegraphics[width=3.3in]{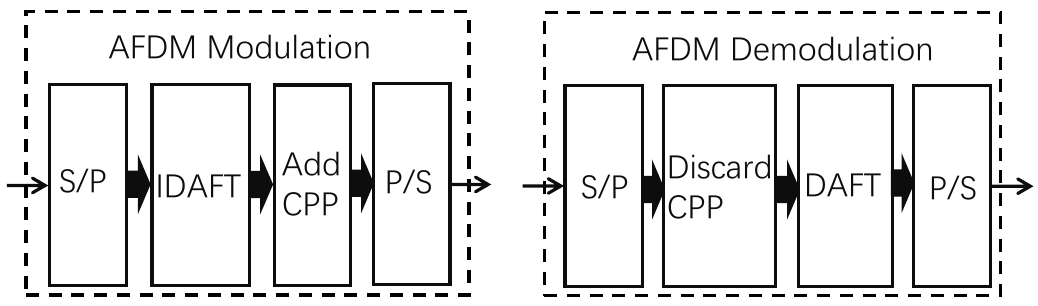}
	\vspace*{-8pt} 
	\caption{AFDM block diagram.  
		\label{fg:AFDMDiagram}} 
	\vspace*{-15pt} 
\end{figure}

The block diagram of AFDM is shown in Fig. \ref{fg:AFDMDiagram}. In AFDM, IDAFT is used to map data symbols into the time domain, while DAFT is performed at the receiver to obtain the effective DAFT domain channel response to the transmitted data. {{Additionally, a $N_{\rm{cp}}-$long chrip-periodic prefix (CPP), occupying the positions of the negative-index time domain samples, is used to ensure that the time delay introduced by the channel does not destroy the periodicity inherently defined by IDAFT \cite{AFDM_propose}.}}

{{Figure \ref{fg:AFDMDiagram} illustrates that DAFT is the core distinguishing feature of AFDM. To analyze the impact of interference in the DAFT domain and to enhance the waveform level anti-interference capability of the AFDM system, this paper considers single-input a single-output (SISO) model. The SISO model enables the derivation of closed form expressions for DAFT domain interference impacts and the development of a tractable parameter optimization algorithm.
Under a doubly selective channel with $L$ paths, each characterized by a delay in samples expressed as $l_i = d_i + \iota_i$, where $d_i$ is the integer delay and $\iota_i$ lies in the interval $(-{1}/{2}, {1}/{2}]$ denoting the fractional delay, together with a Doppler shift (in digital frequency) $v_i$ and a complex gain $h_i$, the AFDM input-output relation can be written as
\begin{align}\label{eq:InOut}
{\bf{y}}=\sum\limits_{i = 0}^{L - 1}  {h_i {\bf{H}}_i {\bf{x}}}+{\bf{w}},
\end{align}
where ${\bf{x}}\in {\mathbb{A}^{N \times 1}}$ denotes the vector of information symbols in the DAFT domain, ${\bf{y}}\in {\mathbb{C}^{N \times 1}}$ denotes the vector of the DAFT domain output symbols, ${{{\bf{w}}\sim \mathcal {CN}\left( {0\text{,} P_{\rm{n}}\bf{I}} \right)}}$ is additive Gaussian noise vector, and the elements of ${\bf{H}}_i$ can be given by \cite{AFDM_ISAC}
\begin{align}\label{eq:Hi}
{\bf{H}}_i[p,q]=& \frac{1}{N}e^{j\frac{2\pi}{N}(Nc_{1}l_{i}^{2}-ql_{i}+Nc_{2}(q^{2}-p^{2}))}  \nonumber \\  & \cdot \underbrace{ \sum\limits_{n = 0}^{N - 1} {{e^{j2\pi \left( {q - p - 2N{c_1}{l_i} + {k_i}} \right)n}}{e^{j2\pi {\iota _i}\varepsilon \left( {n,{l_i}} \right)}}}}_ {{\mathcal {F}}_{i}\left ({p, q}\right)},
\end{align}
where $k_{i}\triangleq N\cdot v_{i}$, 
\begin{equation}
\vspace*{-2pt} 
{
	\label{eq:PhaseFractional}
	\varepsilon \left( {n,{l_i}} \right) = \sum\limits_{x = 0}^{2N{c_1}} {x{{I_{\boldsymbol {\mathcal {L}}_{q,x}}}({{\left\langle {n - {l_i}} \right\rangle }_N})}}  \text{,} }
\vspace*{-2pt} 
\end{equation}	
where ${I_{\boldsymbol {\mathcal {L}}_{q,x}}}$ is the indicator function of the set ${\mathcal {L}}_{q,x}= \left[ {\left\lfloor {\frac{{N - q}}{{2N{c_1}}} + \frac{{x - 1}}{{2{c_1}}}} \right\rfloor  + 1,\left\lfloor {\frac{{N - q}}{{2N{c_1}}} + \frac{x}{{2{c_1}}}} \right\rfloor } \right]$. It is pertinent to highlight that ${{\mathcal {F}}_{i}\left ({p, q}\right)}$ in (\ref{eq:Hi}) simplifies to $\frac{{1 - {e^{ - j2\pi (p - q - {k_i} + 2N{c_1}{l_i})}}}}{{1 - {e^{ - j\frac{{2\pi }}{N}(p - q - {k_i} + 2N{c_1}{l_i})}}}}$ for zero fractional delay.}}
Since the positions of the non-zero entries of ${\bf{H}}_i$ and ${\bf{H}}_j$ ($i \ne j$) do not overlap, {{AFDM could achieve the full diversity order in doubly selective channels \cite{AFDM_TWC}.}} However, in \cite{AFDM_TWC}, maximum likelihood detection applied in AFDM is prohibitively complex to implement. Meanwhile, other detectors used in AFDM, e.g., MMSE \cite{AFDM_DSE}, may not achieve full diversity while the complexity is cubic polynomial time.

\subsection{Typical Malicious Interferences}
Typical malicious interferences, which include tone, sweeping, broadband, and narrowband interference \cite{Jamming2,Jamming3}, are briefly reviewed in this subsection.

\subsubsection{{{Tone interference}}}
{{In tone interference, one or more adversarial tones are deliberately placed across entire communication bandwidth to degrade communication performance. Tone interference includes single-tone interference and multiple-tone interference \cite{mao2016robust}. Mathematically, the signal model of tone interference is
\begin{equation}
\vspace*{-2pt} 
{
	\label{eq:STI}
	{J_{\rm{t}}}\left( t \right) = \sqrt {\frac{{{P_{\rm{i}}}}}{{{N_{\rm{i}}}}}} \sum\limits_{k = 0}^{{N_{\rm{i}}} - 1} {{e^{j\left( {2\pi {f_{{\rm{i}},k}}t + {\theta _{{\rm{i}},k}}} \right)}}} \text{,} }
\vspace*{-2pt} 
\end{equation}
where $P_{\rm{i}}$ is the power of interference, $N_{\rm{i}} \in {{\mathbb{N}}^{+}}$ is the number of adversarial tones, ${f_{{\rm{i}},k}}$ and ${\theta _{{\rm{i}},k}}$ are the carrier frequency and initial phase of $k-$th tone, respectively. Note that when $N_{\rm{i}}=1$, (\ref{eq:STI}) corresponds to a single-tone interference, whereas when  $N_{\rm{i}}>1$, (\ref{eq:STI}) represents a multi-tone interference. The frequency of single-tone interference is randomly distributed over the entire communication bandwidth, whereas the frequencies of multi-tone interference are allocated within the interference bandwidth according to a predefined frequency spacing.}}

\subsubsection{Sweeping interference}
{{Sweeping interference continuously changes instantaneous interference frequency over time, such as linear frequency modulation interference used by adversarial devices \cite{Jamming2}. Mathematically, the signal model of sweeping interference is}} 
\begin{equation}
\vspace*{-2pt} 
{
	\label{eq:SWI}
	{J_{\rm{sw}}}\left( t \right) = \sqrt {{P_{\rm{i}}}} {e^{{{j\left( {2\pi {f_{\rm{i}}}t + {\theta _{\rm{i}}} + \pi {\varphi _{\rm{i}}}{{\left\langle t \right\rangle }_{{T_{\rm{i}}}}}^2} \right)}}}}\text{,} }
\vspace*{-2pt} 
\end{equation}
where $\varphi_{\rm{i}}$ denotes frequency modulation slope, $T_{\rm{i}}$ represents the period of a single frequency sweep. {{The bandwidth covered by sweeping interference can be expressed as $B_{\rm{i}}=\varphi_{\rm{i}}T_{\rm{i}}$. To achieve maximal suppression of a target communication channel, practical sweeping interference typically scan across the entire communication bandwidth.}} 

\subsubsection{Broadband interference}
{{The broadband interference, also referred to as barrage jamming \cite{shahriar2014phy}, targets the entire channel bandwidth occupied by communication systems and effectively raises the background noise level at the receiver. Broadband interference creates a higher-noise environment that makes it more difficult for the communication system to operate. Since broadband interference generates signals that are similar to broadband background noise, it is commonly modeled as a zero-mean complex Gaussian random variable \cite{Jamming1}, i.e.,}}
\begin{equation}
\vspace*{-2pt} 
{
	\label{eq:BBI}
	{J_{\rm{bb}}}\left( t \right) = \sqrt {{P_{\rm{i}}}} z\left( t \right) \text{,} }
\vspace*{-2pt} 
\end{equation}
where $z\left( t \right)$ conformed to a complex centered Gaussian distribution, i.e., ${z}\left( t \right) \sim \mathcal {CN}\left( {0\text{,} 1}\right)$. 

\subsubsection{Narrowband interference}
Compared with broadband interference, narrowband interference {{concentrates interfernece power on only a portion of the frequency spectrum used by the communication system, rather than the entire band, and the occupied spectral range of interference can be adjusted \cite{Jamming1}.}} Narrowband interference can be characterized either by a white Gaussian signal passing through a filter or by a randomly modulated Phase Shift Keying (PSK) signal \cite{Jamming2}. The first signal model {{is expressed as}}
\begin{equation}
\vspace*{-2pt} 
{
	\label{eq:NBI1}
	J_{\rm{nb}}^1\left( t \right) = \sqrt {{P_{\rm{i}}}} {e^{ - j\left( {2\pi {f_{\rm{i}}}t + {\theta _{\rm{i}}}} \right)}}\int_{ - \infty }^\infty  {h\left( {{\Omega_{\rm{i}}},\tau } \right)} z\left( {t - \tau } \right)d\tau \text{,} }
\vspace*{-2pt} 
\end{equation}
where {{$\Omega_{\rm{i}}$ represents the subset of the communication system spectrum affected by the interference, ${h\left( {{\Omega_{\rm{i}}},t } \right)}$ denotes the impulse response of a filter associated with $\Omega_{\rm{i}}$, satisfying}}
\begin{equation}
\vspace*{-2pt} 
{
	\label{eq:ht}
	{\int_{ - \infty }^\infty  {\left| {h\left( {{\Omega_{\rm{i}}},t} \right)} \right|} ^2}dt = 1 \text{.} }
\vspace*{-2pt} 
\end{equation}
{{Note that interference occupied spectral range ${\Omega_{\rm{i}}}$ can be an adjacent spectral segment and it can also consist of non-adjacent spectral segments. Narrowband interference that follows the signal model (\ref{eq:NBI1}) is also referred to as partial-band jamming \cite{shahriar2014phy}.}}

The second signal model is given by
\begin{equation}
\vspace*{-2pt} 
{
	\label{eq:NBI2}
	{J_{\rm{nb}}^2}\left( t \right) = \sqrt {{P_{\rm{i}}}} {e^{j\left( {2\pi {f_{\rm{i}}}t + {\theta _{\rm{i}}}} \right)}}\sum\limits_{p = 0}^{\infty} {{a_{\rm{i}}}\left( p \right)g\left( {\frac{{t{B_{\rm{i}}}}}{2} - p} \right)}   \text{,} }
\vspace*{-2pt} 
\end{equation}
where ${a_{\rm{i}}}\left( p \right)$ is {{typically taken as a random PSK sequence in narrowband interference to represent an unpredictable interference, and $g\left( {t} \right)$ is the rectangular window defined by
\begin{align}\label{eq:REC}
\vspace*{-2pt} 
g\left( t \right) =
\begin {cases}
1 \quad t \in \left[ {0,\left. 1 \right)} \right.\\
0 \quad otherwise\\
\end{cases}.
\vspace*{-2pt} 
\end{align}
The pulse scaling $g\left( {\frac{{t{B_{\rm{i}}}}}{2} - p} \right)$ implies a PSK symbol period of $T_{{\rm{i}},s}=\frac{2}{B_{\rm i}}$, which determines the effective bandwidth of the narrowband interference in model (\ref{eq:NBI2}).}}

To develop a tailored anti-interference system for AFDM, it is essential to analyze the impact of typical interferences in the DAFT domain based on interference signal models.
\subsection{Performance Metric}
Generally speaking, anti-interference strategies enhance interference resistance at the cost of reduced spectral and energy efficiency. This phenomenon reveals an inherent trade-off between communication performance and anti-interference capability. Thus, {{an}} appropriate metric for design guidance is essential to achieve a balance between resource efficiency and interference mitigation capability. 

To ensure the provision of reliable communications for mobile applications \cite{IMT2021White}, several works adopt the packet-level performance as the primary indicator \cite{zhang2023packet}. From the perspective of packet-level analysis, packet throughput quantifies the number of successfully transmitted packets over time, which is 
\begin{equation}
\vspace*{-2pt} 
{
	\label{eq:PT_de}
	\eta  = \frac{{{P_{\rm{pc}}}}}{{T_{\rm{p}}}} \text{,} }
\vspace*{-2pt} 
\end{equation}
where $P_{\rm{pc}}$ is the probability of successful packet transmission, and $T_{\rm{p}}$ denotes packet transmission time.

Obviously, packet throughput provides a realistic indicator of resource utilization and anti-interference capability. On one hand, under resource constraints, a high packet throughput indicates that the communication system is capable of transmitting a larger amount of effective information. Moreover, packet throughput accounts for real-world conditions such as channel impairments and coding strategies, yielding an accurate representation of resource utilization \cite{zhang2023packet}. On the other hand, as interference degrades signal quality and increases packet errors, packet throughput decreases correspondingly. A system capable of maintaining higher packet throughput under interference conditions demonstrates superior robustness. Therefore, we adopt packet throughput as a guiding metric for anti-interference design in this paper. 
\section{Impacts Analyses of Malicious Interferences in the DAFT Domain}
In this section, the impacts of typical interferences in the DAFT domain {{are}} analyzed through deriving closed-form expressions of interferences in the DAFT domain.
To address the challenge posed by finite quadratic exponential summation in analyses of tone and sweeping interference, an analysis method based on the AFT convolution theorem and the stationary phase principle is adopted. 
Moreover, the analyses of both broadband and narrowband interference are conducted based on statistical characteristics.
	
\subsection{Impact Analysis of Tone Interference}
The transformation of interference from the time domain to the DAFT domain involves the intractable finite quadratic exponential summation that fails to intuitively reveal the tone interference impact. 
To address this challenge, tone interference in the continuous AFT domain is first obtained by applying the AFT convolution theorem and the stationary phase principle.
Furthermore, to overcome the issue that the sampled result of the interference in the AFT domain can not directly reveal the impact, we derive the amplitude and phase of the sampled result.
Consequently, we obtain a closed-form expression of tone interference in the DAFT domain, which reveals the impact in the DAFT domain intuitively.

Specifically, {{tone interference after down conversion and sampling at communication receiver could be modeled as
\begin{equation}
\vspace*{-2pt} 
{
	\label{eq:STI_Dis}
	{J_{\rm{t}}}\left( n \right) = \sqrt {\frac{{{P_{\rm{i}}}}}{{{N_{\rm{i}}}}}} \sum\limits_{k = 0}^{{N_{\rm{i}}} - 1} {{e^{j\left( {2\pi {f_{{\rm{d}},k}}n + {\theta _{{\rm{i}},k}}} \right)}}} ,n = 0,1, \ldots N - 1 \text{,} }
\vspace*{-2pt} 
\end{equation}	
where ${f_{{\rm{d}},k}} = {{f_{{\rm{m}},k}}}/{{{f_{\rm{s}}}}}$, ${{f_{{\rm{m}},k}}}=\left({{{f_{{\rm{i}},k}}} - {f_{\rm{c}}}}\right)$, $f_{\rm{s}}$ is the sampling frequency. Accordingly, tone interference in the DAFT domain can be given by
\begin{equation}
\vspace*{-2pt} 
{
	\label{eq:STI_DAFT1}
	J_{\rm{t}}^{\rm{A}}\left( m \right) = \sqrt {\frac{{{P_{\rm{i}}}}}{{N{N_{\rm{i}}}}}} \sum\limits_{n = 0}^N {\sum\limits_{k = 0}^{{N_{\rm{i}}} - 1} {{e^{ - j2\pi \left( {\frac{{mn}}{N} + {c_1}{n^2} + {c_2}{m^2} - {f_{d,k}}n - {\theta _{{\rm{i}},k}}} \right)}}} }   \text{,} }
\vspace*{-2pt} 
\end{equation}}}
Due to the finite quadratic exponential summation \cite{SUM} on the right-hand side of (\ref{eq:STI_DAFT1}), it is challenging to intuitively reveal the impact of tone interference.

To intuitively reveal the tone interference impact, we first analyze tone interference in the continuous AFT domain. Applying the AFT directly to signal model (\ref{eq:STI}) yields 
\begin{align}\label{eq:STI_Lu}
\vspace*{-2pt} 
L_{\rm{f}}^{\rm{A}}\left( u \right) {=}\sqrt{\frac{1}{{ {2\pi b } }}}\int_{ - \infty }^\infty  {f_{\rm{t}}\left(t\right)} {K_{a,b,c,d}}\left( {t,u} \right)dt 
\text{,}
\vspace*{-2pt} 
\end{align}
where ${f_{\rm{t}}}\left( t \right) = \sqrt {{P_{\rm{i}}}/{N_{\rm{i}}}} \sum\limits_{k = 0}^{{N_{\rm{i}}} - 1} {{e^{j\left( {2\pi {f_{{\rm{m}},k}}t + {\theta _{{\rm{i}},k}}} \right)}}} g\left( {t{f_{\rm{s}}}/N} \right)$.
However, (\ref{eq:STI_Lu}) introduces significant errors to the interference in the AFT domain because it neglects the aliasing effect in the AFT domain caused by sampling. 
Therefore, the AFT should be applied to $\left[f_{\rm{t}}\left( t \right)q\left( t \right)\right]$, where $q\left( t \right) = \sum\limits_{n =  - \infty }^\infty  {\delta \left( {t - {n}/{{{f_{\rm{s}}}}}} \right)} $. Let $L_{\rm{t}}^{\rm{A}}\left( u \right)$ represent the AFT of $\left[f_{\rm{t}}\left( t \right)q\left( t \right)\right]$.
Directly solving $L_{\rm{t}}^{\rm{A}}\left( u \right)$ remains challenging due to the presence of finite quadratic exponential summation. To address this, we utilize the AFT convolution theorem to refomulate $L_{\rm{t}}^{\rm{A}}\left( u \right)$ as
\begin{equation}
\vspace*{-2pt} 
{
	\label{eq:CT}
	L_{\rm{t}}^{\rm{A}}\left( u \right) = \frac{{{e^{ - j\frac{a}{{2b}}{u^2}}}}}{{2\pi  b }}\left\{ {\left[ {L_{\rm{f}}^{\rm{A}}\left( u \right){e^{j\frac{a}{{2b}}{u^2}}}} \right] * {F_{\rm{q}}}\left( {u/b} \right)} \right\}\text{,} }
\vspace*{-2pt} 
\end{equation}
where ${{F_{\rm{q}}}\left( {u} \right)}$ is the Fourier transform of $q\left( t \right)$.
Due to the presence of an exponential quadratic term, $L_{\rm{f}}^{\rm{A}}\left( u \right)$ is an oscillatory integral which is difficult to evaluate. 
To calculate $L_{\rm{f}}^{\rm{A}}\left( u \right)$, we utilize the stationary phase principle \cite{POSP} based on the fact that $g\left(x\right)$ in (\ref{eq:STI_Lu}) satisfies the slowly varying condition required by the stationary phase principle.
The stationary phase principle is illustrated in Lemma 1.
\begin{lemma} 
	Let $I\left( \lambda  \right) = \int_{ - \infty }^\infty {{f\left( x \right){e^{ - j\lambda x}}{e^{j\phi \left( x \right)}}}dx}$, where $f\left(x\right)$ is slowly varying compared to the rapid oscillations of the exponential term ${e^{j\phi \left( {x} \right)}}$.
	The dominant contributions to $I\left( \lambda  \right)$ come from points where the derivative of $\phi \left( {x} \right)$ vanishes.
	Then $I\left( \lambda  \right)$ can be calculated as\cite{POSP}
	\begin{align}\label{eq:SOP}
	\vspace*{-2pt} 
	&\ I\left( \lambda  \right) \approx \sqrt {\frac{{2\pi }}{{{\phi ^{''}}\left( {{x_k}} \right)}}} f\left( {{x_k}} \right){e^{ - j\lambda {x_k}}}{e^{j\phi \left( {{x_k}} \right)}}{e^{j\frac{\pi }{4}{\mathop{\rm sgn}} \left( {{\phi ^{''}}\left( {{x_k}} \right)} \right)}}, &
	\vspace*{-2pt} 
	\end{align}
	where ${x_k}$ is stationary point which satisfies ${\phi ^{'}}\left( {{x_k}} \right)=0$, ${\phi ^{'}}\left( {{x}} \right)$ and ${\phi ^{''}}\left( {{x}} \right)$ are the first and second derivatives of ${\phi}\left( {{x}} \right)$, respectively.
\end{lemma}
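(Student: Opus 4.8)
The plan is to prove Lemma~1 by the classical method of stationary phase. Collect the rapidly oscillating factors into a single exponent by setting $\Phi(x):=\phi(x)-\lambda x$, so that $I(\lambda)=\int_{-\infty}^{\infty}f(x)\,e^{j\Phi(x)}\,dx$ and a stationary point $x_k$ is a zero of $\Phi'$, at which (in the nondegenerate case treated here) $\Phi''(x_k)=\phi''(x_k)\ne 0$. First I would \emph{localise} the integral near $x_k$: on any interval bounded away from $x_k$ the phase $\Phi$ is strictly monotone, so rewriting $e^{j\Phi(x)}=\frac{1}{j\Phi'(x)}\,\frac{d}{dx}e^{j\Phi(x)}$ and integrating by parts repeatedly — using that $f$ is slowly varying relative to the oscillations of $e^{j\Phi}$, so that $f/\Phi'$ and its successive derivatives stay controlled — shows that the contribution of $\{\,|x-x_k|>\varepsilon\,\}$ is of strictly lower order than that of a small neighbourhood of $x_k$. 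After a smooth partition of unity it therefore suffices to estimate $\int_{x_k-\varepsilon}^{x_k+\varepsilon}f(x)\,e^{j\Phi(x)}\,dx$.

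Next I would \emph{Taylor-expand the phase} about $x_k$: since $\Phi'(x_k)=0$, $\Phi(x)=\Phi(x_k)+\tfrac12\Phi''(x_k)(x-x_k)^2+\mathcal{O}\!\big((x-x_k)^3\big)$, and the slowly-varying hypothesis lets us replace $f(x)$ by the constant $f(x_k)$ on the short interval; both approximations incur only lower-order corrections. Substituting $s=x-x_k$ and extending the limits back to $(-\infty,\infty)$ — legitimate since the tails contribute only lower-order terms, again by the non-stationary-phase argument together with the conditional convergence of the Fresnel integral — gives
\[
I(\lambda)\approx f(x_k)\,e^{j\Phi(x_k)}\int_{-\infty}^{\infty}e^{\,j\frac12\phi''(x_k)s^{2}}\,ds .
\]

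Finally I would \emph{evaluate the Gaussian integral}: for real $a\ne0$,
\[
\int_{-\infty}^{\infty}e^{\,j a s^{2}/2}\,ds=\sqrt{\tfrac{2\pi}{|a|}}\;e^{\,j\frac{\pi}{4}\operatorname{sgn}(a)},
\]
which follows from $\int_{-\infty}^{\infty}e^{-\alpha s^{2}}\,ds=\sqrt{\pi/\alpha}$ by taking $\alpha=-ja/2$ and rotating the contour onto the steepest-descent ray (equivalently, from the Fresnel integrals). Setting $a=\phi''(x_k)$ and recalling $\Phi(x_k)=\phi(x_k)-\lambda x_k$ assembles the factors into exactly (\ref{eq:SOP}); when several nondegenerate stationary points are present, $I(\lambda)$ is the sum of one such term per point. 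The step I expect to be the main obstacle is the simultaneous control of \emph{all} the discarded error terms — the cubic-and-higher remainder of the phase, the replacement $f(x)\mapsto f(x_k)$, and the extension of the integration window — each of which must be shown to be $o(1)$ against the leading $\mathcal{O}(1)$ contribution; this is precisely where the slowly-varying assumption on $f$ is indispensable (here it is supplied by the rectangular window $g$ in (\ref{eq:STI_Lu}), which is constant on the integration range). Since Lemma~1 is a classical result \cite{POSP}, it is enough to carry out this derivation at the heuristic level rather than to track the remainders in full rigour.
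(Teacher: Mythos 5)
The paper offers no proof of Lemma~1: it is quoted verbatim as the classical stationary--phase principle with a citation to \cite{POSP}, so there is nothing to compare your argument against line by line. Your sketch is the standard textbook derivation (localisation by non-stationary phase / integration by parts, second-order Taylor expansion of the phase, freezing of the slowly varying amplitude, evaluation of the Fresnel integral $\int_{-\infty}^{\infty}e^{jas^{2}/2}\,ds=\sqrt{2\pi/|a|}\,e^{j\frac{\pi}{4}\operatorname{sgn}(a)}$), and it is sound at the level of rigour appropriate for a cited classical result; you also correctly identify that the only delicate part is the uniform control of the discarded remainders. Two small points where your (correct) derivation actually exposes imprecision in the paper's statement rather than the other way around: first, the amplitude your computation produces is $\sqrt{2\pi/\lvert\phi''(x_k)\rvert}$, with the sign of $\phi''(x_k)$ carried entirely by the $e^{j\frac{\pi}{4}\operatorname{sgn}(\phi''(x_k))}$ factor, whereas (\ref{eq:SOP}) omits the absolute value; second, with the total phase $\Phi(x)=\phi(x)-\lambda x$ that you rightly introduce, the stationary points are the solutions of $\phi'(x_k)=\lambda$, not $\phi'(x_k)=0$ as the lemma literally states (the two coincide only when $\lambda$ is absorbed into $\phi$). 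Neither issue is a gap in your proposal; both are worth flagging since the lemma is subsequently applied to quadratic phases in Propositions~1 and~2, where the sign and location of the stationary point matter.
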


With Lemma 1 at hand, {{and according to the analysis in \cite{chassande1999time}, when the highest-order term of the phase in the integrand is no higher than quadratic, the corresponding error of the stationary-phase principle is much smaller than $1$. This guarantees the validity of applying the stationary phase principle in our derivation. Consequently, }}the tone interference in the AFT domain can be further derived by invoking the AFT convolution theorem. 

By sampling the interference in the AFT domain and analysing the sampled results, we {{obtain}} derived closed-form expression of tone interference in the DAFT domain which is provided in Proposition 1. 

\begin{proposition} 
	When $N_{\rm{i}}=1$, corresponding to single-tone interference, the tone interference in the DAFT domain can be written as
	\begin{align}\label{eq:STI_DAFT}
	\vspace*{-2pt} 
	J_{\rm{t}}^{\rm{A}}\left( m \right)&{=} \sqrt {{P_{\rm{i}}}} {e^{j\left( {{\theta _{m,\rm{t}}}} \right)}}
	\text{,}
	\vspace*{-2pt} 
	\end{align}
	where ${\theta _{m,\rm{t}}} \sim {U}\left( {-\pi,\pi } \right)$.
	
	{{When $N_{\rm{i}}>1$, corresponding to multiple-tone interference, the tone interference in the DAFT domain converges in distribution to a complex centered Gaussian distribution, i.e., 
	\begin{equation}
	\vspace*{-2pt} 
	{
		\label{eq:STI_DAFTg1}
		J_{\rm{t}}^{\rm{A}}\left( m \right) \sim \mathcal {CN}\left( {0\text{,} P_{\rm{i}}} \right)\text{.} }
	\vspace*{-2pt} 
	\end{equation}}}
\end{proposition}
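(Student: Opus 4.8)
The plan is to exploit the linearity of the DAFT in both regimes. Writing $J_{\rm t}(n)$ from (\ref{eq:STI_Dis}) as a superposition of pure exponentials with independent random initial phases $\theta_{{\rm i},k}$, I would first express
\begin{align}
J_{\rm t}^{\rm A}(m) = \sqrt{\tfrac{P_{\rm i}}{N_{\rm i}}}\sum_{k=0}^{N_{\rm i}-1} e^{j\theta_{{\rm i},k}}\,G_m(f_{{\rm d},k}),
\end{align}
where $G_m(f)$ denotes the DAFT of the single exponential $e^{j2\pi f n}$, i.e. the finite quadratic exponential sum $\tfrac{1}{\sqrt N}e^{-j2\pi c_2 m^2}\sum_{n=0}^{N-1} e^{-j2\pi(c_1 n^2 + (m/N - f)n)}$. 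The pivotal fact I need is that $|G_m(f)| = 1$ for every index $m$ and every frequency $f$; this is exactly what the AFT-domain analysis preceding the proposition produces. Concretely, I would apply the AFT to the impulse-sampled, windowed tone, use the convolution identity (\ref{eq:CT}) to peel off the chirp factor, evaluate the oscillatory integral $L_{\rm f}^{\rm A}(u)$ of (\ref{eq:STI_Lu}) via the stationary-phase principle of Lemma~1 (whose error is $\ll 1$ here because the phase is quadratic, per \cite{chassande1999time}), and finally re-sample with spacing $\Delta u$; the amplitude of each spectral image then collapses to a constant, which the $1/\sqrt N$ normalization fixes at $1$. Equivalently, one may recognize the inner sum as a complete quadratic Gauss sum with $2Nc_1$ an odd integer, whose modulus equals $\sqrt N$ regardless of the linear coefficient.

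Granting $|G_m(f)|=1$, write $G_m(f)=e^{j\psi_m(f)}$. For $N_{\rm i}=1$ this immediately gives $J_{\rm t}^{\rm A}(m)=\sqrt{P_{\rm i}}\,e^{j\theta_{m,{\rm t}}}$ with $\theta_{m,{\rm t}}=\langle\theta_{{\rm i},1}+\psi_m(f_{{\rm d},1})\rangle_{2\pi}$. Since the initial phase $\theta_{{\rm i},1}$ is uniform over a full period and independent of $f_{{\rm d},1}$, adding the offset $\psi_m(f_{{\rm d},1})$ and reducing modulo $2\pi$ preserves uniformity, so $\theta_{m,{\rm t}}\sim U(-\pi,\pi)$, which is (\ref{eq:STI_DAFT}).

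For $N_{\rm i}>1$ I would argue by the central limit theorem. The multi-tone frequencies lie on a fixed grid, so the offsets $\psi_m(f_{{\rm d},k})$ are deterministic, whereas the $\theta_{{\rm i},k}$ are i.i.d. uniform; hence the phasors $\xi_k\triangleq e^{j(\theta_{{\rm i},k}+\psi_m(f_{{\rm d},k}))}$ are i.i.d. with $\mathbb{E}[\xi_k]=0$, $\mathbb{E}[\xi_k^2]=0$ and $\mathbb{E}[|\xi_k|^2]=1$, so their real and imaginary parts each have variance $1/2$ and are uncorrelated. Applying the bivariate central limit theorem to $\tfrac{1}{\sqrt{N_{\rm i}}}\sum_{k=0}^{N_{\rm i}-1}\xi_k$ yields convergence in distribution to a circularly symmetric Gaussian with per-component variance $1/2$, i.e. $\mathcal{CN}(0,1)$; scaling by $\sqrt{P_{\rm i}}$ gives $J_{\rm t}^{\rm A}(m)\xrightarrow{d}\mathcal{CN}(0,P_{\rm i})$, which is (\ref{eq:STI_DAFTg1}).

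I expect the main obstacle to be the first step — showing that a single tone spreads with exactly uniform magnitude over all DAFT bins. Because the finite quadratic exponential sum resists direct evaluation, the detour through the continuous AFT is unavoidable, and the bookkeeping there is delicate: the Dirac comb $q(t)$ must be retained so that the aliasing images in the AFT domain are captured (dropping it, as in (\ref{eq:STI_Lu}), introduces error), the stationary-phase estimate (\ref{eq:SOP}) must be applied to the right oscillatory integral, and both amplitude and phase must be tracked carefully through the re-sampling at spacing $\Delta u$. Once $|G_m(f)|\equiv 1$ is secured, the uniform-phase conclusion for $N_{\rm i}=1$ and the CLT for $N_{\rm i}>1$ are routine.
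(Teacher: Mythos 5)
Your proposal is correct and follows essentially the same route as the paper's Appendix A: the AFT convolution theorem with the Dirac comb retained, the stationary-phase evaluation of the windowed-tone integral, re-sampling at spacing $\Delta u$, and a periodicity/Gauss-sum argument showing the residual quadratic exponential sum has constant modulus (the paper's $|L(n,m)|^2=|n|$), followed by the uniform-phase observation for $N_{\rm i}=1$ and the central limit theorem for $N_{\rm i}>1$. Your multi-tone step is in fact slightly more careful than the paper's, since you verify the vanishing pseudo-variance needed for circular symmetry of the Gaussian limit.
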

\begin{proof}
	See Appendix \ref{proof_Corollary1}.
\end{proof}
\begin{figure}
	\centering
	\includegraphics[width=2.8in]{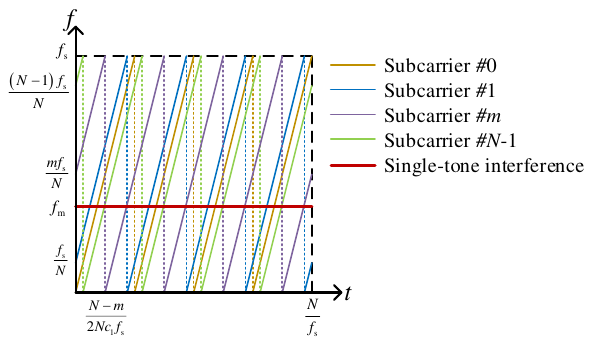}
	\vspace*{-8pt} 
	\caption{{{Time-frequency representation of AFDM subcarriers and singe-tone interference.}}  
		\label{fg:TF_STI}} 
	\vspace*{-10pt} 
\end{figure}
\begin{figure}
	\centering
	\includegraphics[width=2.8in]{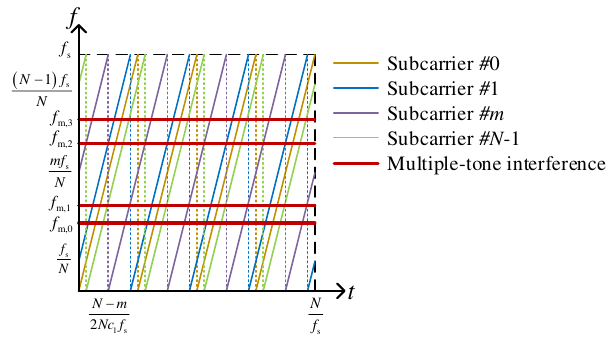}
	\vspace*{-8pt} 
	\caption{{{Time-frequency representation of AFDM subcarriers and multiple-tone interference.}}  
		\label{fg:TF_MTI}} 
	\vspace*{-10pt} 
\end{figure}
Proposition 1 intuitively reveals that tone interference imposes identical impact on every AFDM subcarrier. This conclusion is fully consistent with the time-frequency characteristics of both AFDM and tone interference, as illustrated in Fig. \ref{fg:TF_STI} and Fig. \ref{fg:TF_MTI}. {{When $N_{\rm{i}}=1$, corresponding to single-tone interference, each AFDM subcarrier spans the entire time–frequency plane and therefore exhibits the same degree of overlap with the single-tone component. Consequently, the resulting DAFT domain single-tone interference has identical amplitude $\sqrt {{P_{\rm{i}}}}$ across all indices. The phase term ${\theta _{m,\rm{t}}}$ in (\ref{eq:STI_DAFT}) represents the phase contributed by the interference at the overlap position of each subcarrier, and because the interference phase $\theta_{\rm{i}}$ is uniformly distributed over $\left[-\pi , \pi\right]$, the resulting ${\theta _{m,\rm{t}}}$ for each DAFT index also follows a uniform distribution over $\left[-\pi , \pi\right]$. When $N_{\rm{i}}>1$, corresponding to multiple-tone interference, each AFDM subcarrier still overlaps equally with the set of interference tones in the time–frequency plane. Since the tones have independent random phases, their DAFT domain contributions superpose to yield a random sum whose statistical behavior converges to a complex zero-mean centered Gaussian distribution. In turn, the DAFT domain multiple-tone interference exhibits Gaussian randomness across all indices, as shown in (\ref{eq:STI_DAFTg1}).}}

\subsection{Impact Analysis of Sweeping Interference}
Similar to the analysis of tone interference, the finite quadratic exponential summation significantly complicates the analysis of sweeping interference after sampling. Thus, we first analyze sweeping interference in the continuous AFT domain which is denoted by the AFT of the analog-to-digital converted sweeping interference. Then a closed-form expression of sweeping interference in the DAFT domain is derived by calculating the amplitude and phase of the sampled results of the impact in the AFT domain.

In this subsection, we consider the scenario where sweeping interference impact is maximized, i.e., the sweep interference bandwidth matches the communication signal bandwidth. In this case, $B_{\rm{i}}$ is equal to the communication bandwidth. Then the analog-to-digital converted sweeping interference is given by
\begin{align}\label{eq:SWI_ad}
\vspace*{-2pt} 
{s_{sw}}\left( t \right)  &{{=}}\sqrt {{P_{\rm{i}}}} {e^{j{\theta _{\rm{i}}}}}\sum\limits_{{n_{\rm{s}}} = 0}^{{N_{\rm{s}}}} {\sum\limits_{n = 0}^{\frac{{{n_{\rm{s}}}N}}{{{N_{\rm{s}}}}} - 1} {{e^{j\pi \left( {{f_{\rm{m}}}{t^{'}} + {\varphi _{\rm{i}}}{{\left( {{t^{'}}} \right)}^2}} \right)}}} \delta \left( {t - \frac{n}{{{f_{\rm{s}}}}}} \right)}   \nonumber\\
&{=} \sqrt {{P_{\rm{i}}}} {e^{j{\theta _{\rm{i}}}}}\sum\limits_{n = 0}^{N - 1} {{e^{j\pi \left( {{f_{\rm{m}}}t + {\varphi _{\rm{i}}}{t^2}} \right)}}} \delta \left( {t - \frac{n}{{{f_{\rm{s}}}}}} \right)\nonumber\\
&{=} \sqrt {{P_{\rm{i}}}} {e^{j{\theta _{\rm{i}}}}}{f_{\rm{sw}}}\left( t \right)q\left( t \right)
\text{,}
\vspace*{-2pt} 
\end{align}
where $N_{\rm{s}}\in \mathbb{N} $ denotes the number of cycles of sweeping interference within ${N}/{{{f_{\rm{s}}}}}$, $t^{'}=t - {{n_{\rm{s}}}N}/\left({{{N_{\rm{s}}}{f_{\rm{s}}}}}\right)$ and ${f_{\rm{sw}}}\left( t \right)={{e^{j\pi \left( {{f_{\rm{m}}}t + {\varphi _{\rm{i}}}{t^2}} \right)}}}g\left( {t{f_{\rm{s}}}/N} \right)$.

Then we reformulate sweeping interference in the DAFT domain, i.e., $J_{\rm{sw}}^{\rm{A}}\left( m \right)$, as
\begin{equation}
\vspace*{-2pt} 
{
	\label{eq:SWI_DAFT2}
	J_{{\rm{sw}}}^{\rm{A}}\left( m \right) = \sqrt {\frac{{2\pi b{P_{\rm{i}}}}}{N}}  {e^{j{\theta _{\rm{i}}}}}L_{\rm{sw}}^{\rm{A}}\left( u \right){|_{u = m\Delta u}} \text{,} }
\vspace*{-2pt} 
\end{equation}
where $L_{\rm{sw}}^{\rm{A}}\left( u \right)$ is the AFT of $\left[f_{\rm{sw}}\left(t\right)q\left(t\right)\right]$. Based on the AFT convolution theorem and Lemma 1, we obtain the closed-form expression of sweeping interference in the DAFT domain, which is provided in Proposition 2. 
\begin{proposition}
	Sweeping interference in the DAFT domain can be written as
	{{
			\begin{align}\label{eq:SWI_DAFT}
			\vspace*{-2pt} 
			J_{{\rm{sw}}}^{\rm{A}}\left( m \right)  =
			\begin {cases}
			{\sqrt {{P_{\rm{i}}}} {e^{j{\theta _{m,\rm{w}}}}}} \quad &{{{\varphi _{\rm{i}}} \ne \frac{d}{{2\pi b}}}} \\
			{\sqrt {{NP_{\rm{i}}}} {e^{j{\theta _{m,\rm{w}}}}}\delta \left( {m - \alpha } \right)} \quad &{{{\varphi _{\rm{i}}} = \frac{d}{{2\pi b}}}} \\
			\end{cases},
			\vspace*{-2pt} 
			\end{align}}}
	where $\alpha  =  {N{{\left\langle {\frac{{{f_{\rm{m}}}}}{{{f_{\rm{s}}}}}} \right\rangle }_1}} $ and ${\theta _{m,\rm{w}}} \sim {U}\left({-\pi,\pi } \right) $. 
\end{proposition}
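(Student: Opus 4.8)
The plan is to mirror the tone-interference argument: push the sampled sweeping interference into the continuous AFT domain with the convolution theorem, resolve the resulting oscillatory integral by stationary phase, sample it back, and finally pin down the amplitudes by a unitarity (energy) argument. Starting from the reformulation (\ref{eq:SWI_DAFT2}), $J_{\rm{sw}}^{\rm{A}}(m)=\sqrt{2\pi b P_{\rm{i}}/N}\,e^{j\theta_{\rm{i}}}L_{\rm{sw}}^{\rm{A}}(u)|_{u=m\Delta u}$ with $L_{\rm{sw}}^{\rm{A}}$ the AFT of $[f_{\rm{sw}}(t)q(t)]$, I would apply the AFT convolution theorem exactly as in (\ref{eq:CT}) to write $L_{\rm{sw}}^{\rm{A}}(u)=\frac{e^{-j a u^2/(2b)}}{2\pi b}\{[L_{\rm{f,sw}}^{\rm{A}}(u)e^{j a u^2/(2b)}]\ast F_{\rm{q}}(u/b)\}$, where $F_{\rm{q}}$ is the Fourier transform of the sampling comb $q(t)$. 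Since convolving with $F_{\rm{q}}$ only periodizes (aliases) the AFT of the windowed chirp, the problem reduces to evaluating $L_{\rm{f,sw}}^{\rm{A}}(u)$, the AFT of $f_{\rm{sw}}(t)=e^{j\pi(f_{\rm{m}}t+\varphi_{\rm{i}}t^2)}g(tf_{\rm{s}}/N)$.

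The dichotomy in (\ref{eq:SWI_DAFT}) comes entirely from the fate of the quadratic phase. The integrand of $L_{\rm{f,sw}}^{\rm{A}}(u)$ carries the phase $\pi(\varphi_{\rm{i}}-\frac{d}{2\pi b})t^2+(\pi f_{\rm{m}}-\frac{u}{b})t-\frac{a u^2}{2b}$, whose $t^2$-coefficient is $\pi(\varphi_{\rm{i}}-d/(2\pi b))$. When $\varphi_{\rm{i}}\neq d/(2\pi b)$ this coefficient is a nonzero constant, and I would invoke Lemma 1 with $g(tf_{\rm{s}}/N)$ as the slowly varying amplitude; the stationary point $t_0(u)$ is affine in $u$ and $\phi''(t_0)=2\pi(\varphi_{\rm{i}}-d/(2\pi b))$ is constant, so the stationary-phase amplitude $\sqrt{2\pi/|\phi''(t_0)|}\,g(t_0 f_{\rm{s}}/N)$ does not depend on $u$ for any $u$ at which $t_0(u)$ lies inside the window support $[0,N/f_{\rm{s}})$ --- which, in the worst case treated here where the sweep spans the whole communication band, holds over all DAFT indices. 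Hence $J_{\rm{sw}}^{\rm{A}}$ is spectrally flat. When $\varphi_{\rm{i}}=d/(2\pi b)$, the interference chirp rate matches the one hard-wired into the DAFT kernel, the $t^2$-terms cancel, and $L_{\rm{f,sw}}^{\rm{A}}(u)=\frac{e^{-ja u^2/(2b)}}{\sqrt{2\pi b}}\int_0^{N/f_{\rm{s}}}e^{j(\pi f_{\rm{m}}-u/b)t}\,dt$ is a Dirichlet kernel in $u$; because the DAFT sampling step $\Delta u$ equals the null spacing of this kernel --- a direct consequence of (\ref{eq:DAFTcondition}) --- the sampled values vanish at every grid point except the one nearest the peak, and once the aliasing wraps that peak modulo the DAFT period it lands on the single bin $\alpha=N\langle f_{\rm{m}}/f_{\rm{s}}\rangle_1$.

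It then remains to fix the two amplitudes and the phase. For the amplitudes I would avoid threading the constants $1/\sqrt{2\pi b}$, $1/(2\pi b)$, $f_{\rm{s}}$, $\Delta t$, $\Delta u$, $c_1$, $c_2$ by hand and instead use that the DAFT is unitary: $\sum_{m=0}^{N-1}|J_{\rm{sw}}^{\rm{A}}(m)|^2=\sum_{n=0}^{N-1}|J_{\rm{sw}}(n)|^2=NP_{\rm{i}}$, since each time sample of the sweeping interference has magnitude $\sqrt{P_{\rm{i}}}$. A spectrally flat $J_{\rm{sw}}^{\rm{A}}$ therefore has magnitude $\sqrt{P_{\rm{i}}}$ on every index, whereas a single-bin $J_{\rm{sw}}^{\rm{A}}$ has magnitude $\sqrt{NP_{\rm{i}}}$ on that bin --- exactly the two branches of (\ref{eq:SWI_DAFT}). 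As for the phase, the residual phase left after the above reduction equals $\theta_{\rm{i}}$ plus a deterministic index-dependent term, so reducing modulo $2\pi$ and using $\theta_{\rm{i}}\sim U(-\pi,\pi)$ yields $\theta_{m,{\rm{w}}}\sim U(-\pi,\pi)$.

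The main obstacle I anticipate is establishing the two spectral shapes rigorously rather than conceptually: controlling the aliasing introduced by the $\ast F_{\rm{q}}(u/b)$ convolution, verifying that the stationary point $t_0(u)$ genuinely stays inside the window over the occupied DAFT band in the generic case, and checking that the $\Delta u$ / Dirichlet-null coincidence makes the sampled kernel a clean Kronecker delta located at exactly the bin $\alpha$ (with no leakage) in the matched-slope case. A minor additional point is justifying that the stationary-phase result may be written as an equality here; this is supplied by the discussion following Lemma 1, namely that a phase of degree at most two renders the stationary-phase error negligible.
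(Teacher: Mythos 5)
Your overall route is the paper's route: reformulate via (\ref{eq:SWI_DAFT2}), apply the AFT convolution theorem to periodize the AFT of the windowed chirp, evaluate that AFT by the stationary phase principle (Lemma 1), and then sample at $u=m\Delta u$. The matched-slope branch and the phase argument are handled essentially as in the paper (and your Parseval observation does cleanly pin the single-bin amplitude at $\sqrt{NP_{\rm i}}$ once single-bin concentration is known). However, there is a genuine gap in the non-matched branch, and it sits exactly where you flag an ``anticipated obstacle'': the flatness of $|J_{\rm sw}^{\rm A}(m)|$ over $m$. Your argument is that the stationary-phase amplitude $\sqrt{2\pi/|\phi''(t_0)|}\,g(t_0 f_{\rm s}/N)$ of the continuous AFT is independent of $u$, ``hence $J_{\rm sw}^{\rm A}$ is spectrally flat.'' This does not follow. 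After convolving with $F_{\rm q}(u/b)$, each sampled value is a \emph{coherent sum} of $|2Nc_1-N_{\rm s}|$ aliased replicas, each of unit magnitude but carrying a quadratic phase in the alias index $k$; the magnitude of that sum could a priori depend on $m$. The paper closes this by reducing the sampled sum to the finite quadratic Gauss-type sum $L\left(n,m\right)=\sum_{k=0}^{|n|-1}e^{j\frac{\pi}{Nn}(Nk+m)^2}$ and proving $|L(n,m)|^2=|n|$ for every $m$ via the periodicity of the summand (cf.\ (\ref{eq:Lnm})--(\ref{eq:L2}) and (\ref{eq:STEP2Lemma3})--(\ref{eq:SWI_DAFT3})). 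Without this computation the flat-magnitude claim is unproved.

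Moreover, your proposed substitute --- the unitarity/Parseval identity $\sum_m|J_{\rm sw}^{\rm A}(m)|^2=NP_{\rm i}$ --- cannot repair this, because it constrains only the \emph{total} energy, not its distribution across bins; it yields the per-bin magnitude $\sqrt{P_{\rm i}}$ only \emph{after} flatness is established, which is precisely the step in question. So the Parseval argument is not an alternative to the Gauss-sum evaluation but is logically downstream of it. To complete your proof you need to carry out (or cite) the $|L(n,m)|^2=|n|$ computation; once that is in place, your amplitude and phase conclusions follow and the two proofs coincide in substance.
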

\begin{proof}
	See Appendix \ref{proof_Corollary2}.
\end{proof}

\begin{figure}
	\centering
	\includegraphics[width=2.8in]{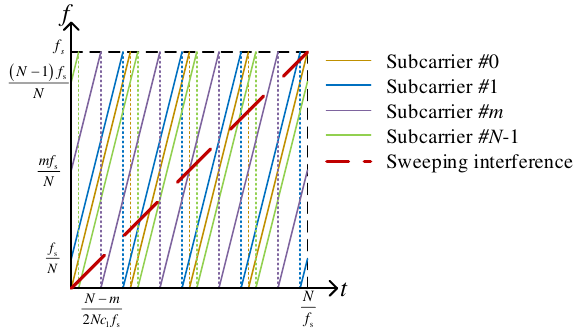}
	\vspace*{-8pt} 
	\caption{{{Time-frequency representation of AFDM subcarriers and sweeping interference with frequency modulation slope differing from that of AFDM subcarrier.}}
		\label{fg:TF_SWI1}} 
	\vspace*{-15pt} 
\end{figure}

\begin{figure}
	\centering
	\includegraphics[width=2.8in]{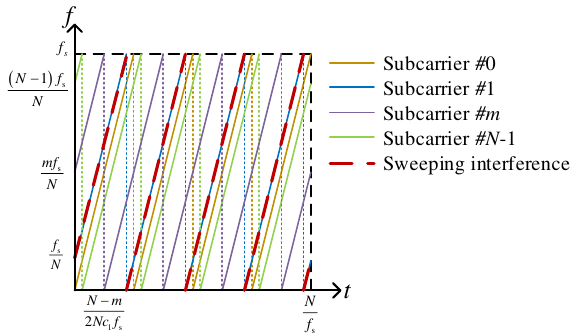}
	\vspace*{-8pt} 
	\caption{{{Time-frequency representation of AFDM subcarriers and sweeping interference with frequency modulation slope matching that of AFDM subcarrier.}}
		\label{fg:TF_SWI2}} 
	\vspace*{-15pt} 
\end{figure}

In Proposition 2, the condition ${{{\varphi _{\rm{i}}} \ne \frac{d}{{2\pi b}}}}$ indicates a mismatch between the frequency modulation slope of sweeping interference and that of AFDM subcarriers. When this condition is satisfied, sweeping interference imparts identical impact on each AFDM subcarrier. As shown in Fig. \ref{fg:TF_SWI1}, each AFDM subcarrier overlaps with sweeping interference to the same extent in the time-frequency plane. {{Consequently, the resulting DAFT domain sweeping interference has identical amplitude $\sqrt {{P_{\rm{i}}}}$ across all indices. And the phase term ${\theta _{m,\rm{w}}}$ in (\ref{eq:SWI_DAFT}) represents the phase contributed by the interference at the overlap position of each subcarrier.}} Whereas, ${{{\varphi _{\rm{i}}} = \frac{d}{{2\pi b}}}}$ indicates that the frequency modulation slope of interference matches that of AFDM subcarriers. In this case, the interference energy is concentrated on a specific subcarrier. As shown in Fig. \ref{fg:TF_SWI2}, {{the sweeping interference overlaps in the time-frequency plane only with the subcarrier whose initial frequency coincides with $f_{\rm{m}}$. Therefore, the resulting sweeping interference in the DAFT domain appears only at specific index, which correspond precisely to the subcarrier that coincides with the sweeping interference.}}

\subsection{Impact Analysis of Broadband Interference}
Broadband interference after down conversion and sampling at communication {{receiver}} could be modeled as
\begin{equation}
\vspace*{-2pt} 
{
	\label{eq:BBI_n}
	{J_{\rm{bb}}}\left( n \right) = \sqrt {{P_{\rm{i}}}} z\left( n \right)\text{,} }
\vspace*{-2pt} 
\end{equation}
where $z\left( n \right)$ denotes samples conformed to a complex centered Gaussian distribution, i.e., $z\left( n \right) \sim \mathcal {CN}\left( {0\text{,} 1} \right)$. From (\ref{eq:BBI_n}), after the DAFT, broadband interference in the DAFT domain can be written as
\begin{equation}
\vspace*{-2pt} 
{
	\label{eq:BBI_DAFT}
	J_{\rm{bb}}^{\rm{A}}\left( m \right) = \sqrt {\frac{{{P_{\rm{i}}}}}{N}} {e^{ - j2\pi {c_2}{m^2}}}\sum\limits_{n = 0}^{N - 1} {z\left( n \right){e^{ - j2\pi \left( {\frac{{mn}}{N} + {c_1}{n^2}} \right)}}}\text{.} }
\vspace*{-2pt} 
\end{equation}

According to (\ref{eq:DAFT}) and (\ref{eq:IDAFT}), we could get that the DAFT is a unitary transformation. Considering that unitary transformations do not change the properties of Gaussian processes \cite{DTSP}, we obtain Proposition 3.

\begin{proposition}
	Broadband interference in the DAFT domain {{conforms}} to a complex centered Gaussian distribution, which could be given by
	\begin{equation}
	\vspace*{-2pt} 
	{
		\label{eq:BBI_effect}
		J_{\rm{bb}}^{\rm{A}}\left( m \right) \sim \mathcal {CN}\left( {0\text{,} P_{\rm{i}}} \right)\text{.} }
	\vspace*{-2pt} 
	\end{equation}
\end{proposition}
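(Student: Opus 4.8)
The plan is to express the DAFT in (\ref{eq:BBI_DAFT}) as the action of a unitary matrix on a standard complex Gaussian vector and then invoke closure of the complex Gaussian family under linear maps, which is precisely the property quoted just before the statement. First I would stack the time-domain samples into ${\bf{J}}_{\rm{bb}}=\sqrt{P_{\rm{i}}}\,{\bf{z}}$ with ${\bf{z}}=[z(0),\dots,z(N-1)]^{\rm{T}}\sim\mathcal{CN}\left({\bf{0}},{\bf{I}}_N\right)$ (the samples are i.i.d.\ since broadband interference is white), and write the transform as ${\bf{J}}_{\rm{bb}}^{\rm{A}}={\bf{A}}\,{\bf{J}}_{\rm{bb}}$, where the DAFT matrix read off from (\ref{eq:DAFT}) is ${\bf{A}}[m,n]=\tfrac{1}{\sqrt N}e^{-j2\pi(\frac{mn}{N}+c_1n^2+c_2m^2)}$.

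Next I would verify that ${\bf{A}}$ is unitary. The cleanest route is the factorization ${\bf{A}}=\boldsymbol{\Lambda}_2{\bf{F}}\boldsymbol{\Lambda}_1$, where $\boldsymbol{\Lambda}_1=\diag(e^{-j2\pi c_1n^2})$ and $\boldsymbol{\Lambda}_2=\diag(e^{-j2\pi c_2m^2})$ are diagonal with unit-modulus entries, hence unitary, and ${\bf{F}}[m,n]=\tfrac{1}{\sqrt N}e^{-j2\pi mn/N}$ is the normalized DFT matrix; alternatively, comparing (\ref{eq:DAFT}) with (\ref{eq:IDAFT}) directly exhibits the IDAFT as ${\bf{A}}^{\rm{H}}$, so that ${\bf{A}}^{\rm{H}}{\bf{A}}={\bf{I}}_N$. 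Since a linear image of a circularly symmetric complex Gaussian vector is again circularly symmetric complex Gaussian, ${\bf{J}}_{\rm{bb}}^{\rm{A}}=\sqrt{P_{\rm{i}}}\,{\bf{A}}{\bf{z}}$ is zero-mean with covariance $P_{\rm{i}}{\bf{A}}{\bf{A}}^{\rm{H}}=P_{\rm{i}}{\bf{I}}_N$ and vanishing pseudo-covariance $P_{\rm{i}}{\bf{A}}\,\mathbb{E}[{\bf{z}}{\bf{z}}^{\rm{T}}]\,{\bf{A}}^{\rm{T}}={\bf{0}}$. Reading off the $m$-th diagonal entry of the covariance yields $\mathbb{E}[|J_{\rm{bb}}^{\rm{A}}(m)|^2]=P_{\rm{i}}$ with independent real and imaginary parts of variance $P_{\rm{i}}/2$, which is exactly (\ref{eq:BBI_effect}).

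There is essentially no genuine obstacle here, in sharp contrast with the tone and sweeping cases: no finite quadratic exponential summation needs to be evaluated. The only point deserving a line of justification is that the quadratic chirp phases $c_1n^2$ and $c_2m^2$ do not spoil properness, i.e., do not generate a nonzero pseudo-covariance — but this is immediate, since $\mathbb{E}[{\bf{z}}{\bf{z}}^{\rm{T}}]={\bf{0}}$ is preserved under every linear map, and in particular under the diagonal unitary factors $\boldsymbol{\Lambda}_1,\boldsymbol{\Lambda}_2$ and under ${\bf{F}}$.
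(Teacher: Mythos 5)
Your proof is correct and follows essentially the same route as the paper, which disposes of this proposition in a single line by noting that the DAFT is a unitary transformation and that unitary transformations do not change the properties of Gaussian processes. Your write-up merely makes this explicit — the factorization ${\bf{A}}=\boldsymbol{\Lambda}_2{\bf{F}}\boldsymbol{\Lambda}_1$, the covariance computation $P_{\rm{i}}{\bf{A}}{\bf{A}}^{\rm{H}}=P_{\rm{i}}{\bf{I}}_N$, and the check that the pseudo-covariance stays zero — which is a more careful rendering of the same argument rather than a different one.
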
	

\begin{figure}
	\centering
	\includegraphics[width=2.8in]{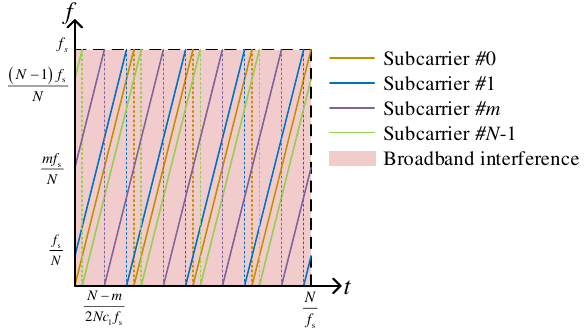}
	\vspace*{-8pt} 
	\caption{{{Time-frequency representation of AFDM subcarriers and broadband interference.}}
		\label{fg:TF_BBI}} 
	\vspace*{-15pt} 
\end{figure}

Based on Proposition 3, the impact of broadband interference in the DAFT domain is equivalent to {{the}} impact of additive Gaussian noise. {{As shown in Fig. \ref{fg:TF_BBI}, each AFDM subcarrier overlaps with broadband interference to the same extent in the time-frequency plane. Due to the independence and whiteness of the broadband interference in the time-frequency plane, the DAFT domain broadband interference at each index is uncorrelated and exhibits properties similar to complex Gaussian white noise, as described in (\ref{eq:BBI_effect}).}}

\subsection{Impact Analysis of Narrowband Interference}
After down conversion, sampling, and the DAFT, narrowband interference in the DAFT domain can be given by
\begin{align}
\label{eq:NBI_DAFT}
J_{\rm{nb}}^{\rm{A}}\left( m \right) = &\sqrt {\frac{{{P_{\rm{i}}}}}{N}} {e^{ - j\left( {2\pi {c_2}{m^2} + {\theta _{\rm{i}}}} \right)}}\nonumber\\
&{\cdot}\sum\limits_{n = 0}^{N - 1} {{J_{\rm{nb}}^i}\left( n \right){e^{  j2\pi {{f_{\rm{d}}}n} }}{e^{ - j2\pi \left( {\frac{{mn}}{N} + {c_1}{n^2}} \right)}}} 
\text{,}
\end{align}
where $i=1$ or $2$, ${f_{\rm{d}}} = \frac{{{f_{\rm{i}}} - {f_{\rm{c}}}}}{{{f_{\rm{s}}}}}$, $f_{\rm{s}}$ is the sampling frequency, $f_{\rm{c}}$ denotes down conversion center frequency, and ${J_{\rm{nb}}^i}\left( n \right)$ denotes the sampled baseband narrowband interference. ${J_{\rm{nb}}^1}\left( n \right)$ corresponding to signal model (\ref{eq:NBI1}) is
\begin{equation}
\vspace*{-2pt} 
{
	\label{eq:NBI_n1}
	J_{\rm{nb}}^1\left( n \right) = \sum\limits_{k =  - \infty }^\infty  {z\left( {n - k} \right)h\left( {{B_{\rm{i}}},k} \right)} \text{,} }
\vspace*{-2pt} 
\end{equation}
where $h\left( {{B_{\rm{i}}},k} \right)$ is the filter coefficients which are chosen to satisfy
\begin{equation}
\vspace*{-2pt} 
{
	\label{eq:h_k}
	\sum\limits_{k =  - \infty }^{ \infty}  {{{\left| {h\left( {{B_{\rm{i}}},k} \right)} \right|}^2} = 1} \text{.} }
\vspace*{-2pt} 
\end{equation}
And ${J_{\rm{nb}}^2}\left( n \right)$ corresponding to signal model (\ref{eq:NBI2}) could be given by
\begin{equation}
\vspace*{-2pt} 
{
	\label{eq:NBI_n2}
	{J_{\rm{nb}}^2}\left( n \right) = \sum\limits_{p = 0}^{\infty} {g\left( {\frac{{n - p{R_{\rm{u}}}}}{{{R_{\rm{u}}}}}} \right){a_{\rm{i}}}\left( p \right)} \text{,} }
\vspace*{-2pt} 
\end{equation}
where ${R_{\rm{u}}} = {{2{f_{\rm{s}}}}}/{{{B_{\rm{i}}}}}$. Unlike broadband interference, the sampled narrowband interference exhibits both randomness and correlation among ${J_{\rm{nb}}^i}\left( n \right)$. To evaluate impact of narrowband interference on AFDM symbols, the statistical characteristics of $J_{\rm{nb}}^{\rm{A}}\left( m \right)$ are derived.  
Specifically, the derived statistical characteristics are provided in Proposition 4. 
\begin{proposition}
	Regardless of whether model (\ref{eq:NBI1}) or model (\ref{eq:NBI2}) is employed, narrowband interference in the DAFT domain, i.e., $J_{\rm{nb}}^{\rm{A}}\left( m \right)$, exhibits identical expectation and variance, which is
	\begin{align}
	\vspace*{-2pt} 
	\label{eq:E&V_NBI}
	\left\{ {\begin{array}{*{20}{c}}
		\mathbb{E}\left\{ {J_{\rm{nb}}^{\rm{A}}\left( m \right)} \right\} = 0\\
		\mathbb{V}\left\{ {J_{\rm{nb}}^{\rm{A}}\left( m \right)} \right\} = {P_{\rm{i}}}
		\end{array}} \right.
	\text{.} 
	\vspace*{-3pt} 
	\end{align}
\end{proposition}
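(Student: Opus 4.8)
The plan is to treat the two interference models (\ref{eq:NBI1}) and (\ref{eq:NBI2}) in parallel, since in both cases $J_{\rm nb}^{\rm A}(m)$ in (\ref{eq:NBI_DAFT}) is a linear functional of the sampled baseband interference $J_{\rm nb}^{i}(n)$ multiplied by unit-modulus factors. First I would establish the zero mean. The prefactor $e^{-j(2\pi c_2 m^2+\theta_{\rm i})}$ and the kernel terms $e^{j2\pi f_{\rm d}n}e^{-j2\pi(mn/N+c_1 n^2)}$ are of unit modulus, so by linearity of expectation $\mathbb{E}\{J_{\rm nb}^{\rm A}(m)\}$ reduces to a linear combination of $\mathbb{E}\{J_{\rm nb}^{i}(n)\}$. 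For model (\ref{eq:NBI1}), (\ref{eq:NBI_n1}) gives $J_{\rm nb}^{1}(n)=\sum_k z(n-k)h(B_{\rm i},k)$ with $z\sim\mathcal{CN}(0,1)$ zero-mean, so $\mathbb{E}\{J_{\rm nb}^{1}(n)\}=0$; for model (\ref{eq:NBI2}), (\ref{eq:NBI_n2}) collapses (because $g$ is the indicator of $[0,1)$) to $J_{\rm nb}^{2}(n)=a_{\rm i}(\lfloor n/R_{\rm u}\rfloor)$, a zero-mean random PSK symbol, so $\mathbb{E}\{J_{\rm nb}^{2}(n)\}=0$. Hence $\mathbb{E}\{J_{\rm nb}^{\rm A}(m)\}=0$ in both cases.

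For the variance, since the mean vanishes we have $\mathbb{V}\{J_{\rm nb}^{\rm A}(m)\}=\mathbb{E}\{|J_{\rm nb}^{\rm A}(m)|^2\}$. Expanding the squared modulus of the finite sum in (\ref{eq:NBI_DAFT}) into a double sum over $n,n'$, the unit-modulus prefactors cancel and
\[
\mathbb{V}\{J_{\rm nb}^{\rm A}(m)\}=\frac{P_{\rm i}}{N}\sum_{n=0}^{N-1}\sum_{n'=0}^{N-1}R_i(n,n')\,e^{j2\pi f_{\rm d}(n-n')}e^{-j2\pi m(n-n')/N}e^{-j2\pi c_1(n^2-n'^2)},
\]
with $R_i(n,n')=\mathbb{E}\{J_{\rm nb}^{i}(n)(J_{\rm nb}^{i}(n'))^{*}\}$. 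The next step is to evaluate $R_i$: using the whiteness of $z$ and the normalization (\ref{eq:h_k}), $R_1(n,n')=\sum_k h(B_{\rm i},k)h^{*}(B_{\rm i},k-(n-n'))$, in particular $R_1(n,n)=\sum_k|h(B_{\rm i},k)|^2=1$; and using independence, zero mean and unit modulus of $\{a_{\rm i}(p)\}$, $R_2(n,n)=|a_{\rm i}(\lfloor n/R_{\rm u}\rfloor)|^2=1$. Therefore the diagonal ($n=n'$) part of the double sum contributes exactly $\frac{P_{\rm i}}{N}\cdot N=P_{\rm i}$, the same for both models.

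It remains to show that the off-diagonal ($n\ne n'$) terms contribute nothing. Here I would use that the narrowband interferer's center-frequency offset $f_{\rm d}$ is uniformly distributed over the digital band, so that averaging over $f_{\rm d}$ gives $\mathbb{E}_{f_{\rm d}}\{e^{j2\pi f_{\rm d}(n-n')}\}=0$ whenever $n\ne n'$; every cross term is then annihilated irrespective of the model-dependent shape of $R_i$ and of the chirp factor $e^{-j2\pi c_1(n^2-n'^2)}$. This leaves $\mathbb{V}\{J_{\rm nb}^{\rm A}(m)\}=P_{\rm i}$ for every index $m$ and for both models, which is (\ref{eq:E&V_NBI}). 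As an independent check, the unitarity of the DAFT gives $\sum_m\mathbb{E}\{|J_{\rm nb}^{\rm A}(m)|^2\}=P_{\rm i}\sum_n\mathbb{E}\{|J_{\rm nb}^{i}(n)|^2\}=P_{\rm i}\sum_n R_i(n,n)=NP_{\rm i}$, consistent with a per-index value of $P_{\rm i}$.

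The step I expect to be the main obstacle is precisely this off-diagonal cancellation. Unlike broadband interference, whose samples are white so that $R_i(n,n')=\delta_{n,n'}$ and the double sum trivially reduces to its diagonal, narrowband interference is temporally correlated — $R_1$ is the (non-delta) autocorrelation of the shaping filter and $R_2$ is constant within each PSK symbol — so the $n\ne n'$ terms are genuinely nonzero before averaging, and the quadratic chirp phase together with the finite summation limits rules out a direct telescoping. Making the argument rigorous therefore hinges on correctly invoking the randomness of $f_{\rm d}$ over the band (equivalently, a symmetry argument) to kill these terms uniformly in $m$; everything else is bookkeeping.
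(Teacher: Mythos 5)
Your treatment of the mean and of the diagonal term coincides with the paper's: the zero mean follows from $\mathbb{E}\{z\}=\mathbb{E}\{a_{\rm i}(p)\}=0$ by linearity, and the diagonal part of the double sum contributes $\frac{P_{\rm i}}{N}\cdot N=P_{\rm i}$ via the normalization (\ref{eq:h_k}) and the unit modulus of the PSK symbols. The genuine gap is exactly where you predicted it: the off-diagonal cancellation. You kill the cross terms by averaging over a center-frequency offset $f_{\rm d}$ assumed uniform over a full digital period, so that $\mathbb{E}_{f_{\rm d}}\{e^{j2\pi f_{\rm d}\Delta n}\}=0$ for integer $\Delta n\ne 0$. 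This is problematic twice over. First, it proves a weaker, differently conditioned statement: you obtain the variance marginalized over $f_{\rm d}$, whereas the interferer has one fixed center frequency during a transmission, and Proposition 5 feeds the per-realization variance $P_{\rm i}$ into the post-despreading SINR model (\ref{eq:IN_GA}); a variance that is only $P_{\rm i}$ on average over $f_{\rm d}$ does not support that step. Second, the assumption is not available: for narrowband interference the occupied band $\Omega_{\rm i}$ must sit inside the communication band, so $f_{\rm d}$ cannot sweep a full period and $\mathbb{E}\{e^{j2\pi f_{\rm d}\Delta n}\}$ is not exactly zero.

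The mechanism the paper actually uses --- and which you explicitly, but incorrectly, rule out --- is deterministic in $f_{\rm d}$. Both $J_{\rm nb}^1(n)$ in (\ref{eq:NBI_n1}) and $J_{\rm nb}^2(n)$ in (\ref{eq:NBI_n2}) are wide-sense stationary, so $\mathbb{E}\{J_{\rm nb}^i(n_1)[J_{\rm nb}^i(n_2)]^*\}$ depends only on the lag $\Delta n=n_2-n_1$. Grouping the off-diagonal sum by lag, the quadratic phase difference $c_1(n_2^2-n_1^2)=c_1\Delta n\,(2n_1+\Delta n)$ is \emph{linear} in $n_1$ at fixed $\Delta n$, so the inner sum reduces to the geometric sum $\sum_{n_1}e^{j4\pi c_1\Delta n\,n_1}$, which vanishes for every $\Delta n\ne 0$ under the standard AFDM choice of $c_1$ (with $2Nc_1$ an integer having no common factor with $N$, and with cyclic indexing justified by the chirp-periodic prefix); the factor $e^{-j2\pi f_{\rm d}\Delta n}$ is constant per lag and plays no role. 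There is therefore no ``telescoping obstruction'': this is the computation in (\ref{eq:E_S2}), giving $\mathbb{E}\{S_2\}=0$ for every fixed $f_{\rm d}$ and every $m$. Replacing your $f_{\rm d}$-averaging step by this stationarity-plus-geometric-sum argument closes the gap; the rest of your proof stands.
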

\begin{proof}
	See Appendix \ref{proof_Corollary4}.
\end{proof}

\begin{figure}
	\centering
	\includegraphics[width=2.8in]{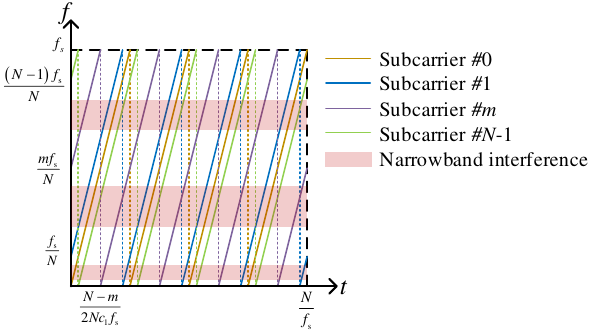}
	\vspace*{-8pt} 
	\caption{{{Time-frequency representation of AFDM subcarriers and narrowband interference.}}
		\label{fg:TF_NBI}} 
	\vspace*{-15pt} 
\end{figure}

From Proposition 4, it is evident that the impact of narrowband interference in the DAFT domain is independent of $m$. {{As shown in Fig. \ref{fg:TF_NBI}, each AFDM subcarrier overlaps with narrowband interference to the same extent in the time-frequency plane. Although the narrowband interference at different DAFT indices is not completely independent, its impact on each AFDM subcarrier is identical, and all subcarriers exhibit the same statistical characteristics, as shown in (\ref{eq:E&V_NBI}).}}

To summarize the impact analyses of interferences in the DAFT domain, we classify interference impacts into stationary and non-stationary categories depending on whether the impact on AFDM subcarriers is identical.

\textit{i) Stationary interference impact in the DAFT domain:} 
When interference is one of broadband, narrowband, tone, or sweeping interference (with frequency modulation slope differing from that of AFDM subcarrier), interference imparts identical impact on each AFDM subcarrier. Therefore, interference impact is referred to as stationary impact. {{Note that partial-band jamming and multi-tone interference, which often arise in practical adversarial devices, naturally fall into this category, and can be handled within the proposed anti-interference AFDM system in Section IV by treating them as stationary interference impacts.}}

\textit{ii) Non-stationary interference impact in the DAFT domain:} 
{{When interference is sweeping interference whose frequency modulation slope is aligned with that of AFDM subcarriers,}} the interference energy is concentrated on a specific AFDM subcarrier, and the interference impact is referred to as non-stationary impact.

%
\section{Packet Throughput-Guided Parameter Optimization Algorithm of Anti-interference AFDM System}

To balance resource efficiency and anti-interference capability, we derive the analytical relationship between packet throughput and the system {{parameters}} and design a parameter optimization algorithm that maximizes packet throughput. 
We analytically derive the impact of system parameters on packet throughput under different types of interference and doubly selective channel conditions. Accordingly, we propose a parameter optimization algorithm to optimize the spreading sequence length under fixed ECC parameters, where the nonlinear optimization objective is solved by Newton's method.

The classification of interference impact allows us to formulate a unified anti-interference AFDM framework as illustrated in Fig. \ref{fg:FRAMEWORK}. This framework integrates spread spectrum in the DAFT domain and ECC to suppress both stationary and non-stationary interference impact. {{After that, we focus on analyzing the relationship between the framework parameters and packet throughput and perform parameter optimization, where the proposed algorithm jointly handles both stationary and non-stationary interference impact.}}

{{For clarity, the main variables in this section are listed in Table \ref{tab:variable_tab}.
		\begin{table}
			\centering
			\caption{{Main Variables in Section IV}}\label{tab:variable_tab}
			\renewcommand\arraystretch{1.2}
			\begin{threeparttable}
				\begin{tabular}{|c|l|}
					\hline
					\textbf{{{Variable}}}                       &\textbf{{{Meaning}}}                       \\ \hline
					{$N_{\rm{p}}$} &  {{The number of bits per packet}}    \\ \hline
					{$N_{\rm{i}}$} &  {{Number of input bits for each ECC codeword}}     \\ \hline
					{$N_{\rm{o}}$} &  {{Number of output bits for each ECC codeword}}     \\ \hline			
					{$N_{\rm{e}}$} &  {{The maximum number of error correction bits per group}}     \\ \hline
					{$N_{\rm{m}}$} &  {{Modulation order of a constellation diagram}}   \\ \hline
					{$N_{\rm{cp}}$} &  {{The length of the CPP}}     \\ \hline			
					{$P_{\rm{s}}$} &  {{The transmitted power}}  \\ \hline			
					{$B_{\rm{c}}$} &  {{The bandwidth occupied by the communication system}}     \\ \hline			
					{${P_{\rm{dc}}}$} &  {{Codeword decoding success probability}}    \\ \hline			
					{$\eta$}  &  {{Packet throughput}}    \\ \hline				
				\end{tabular}
			\end{threeparttable}
		\end{table}
}}
\begin{figure}
	\centering
	\includegraphics[width=3.55in]{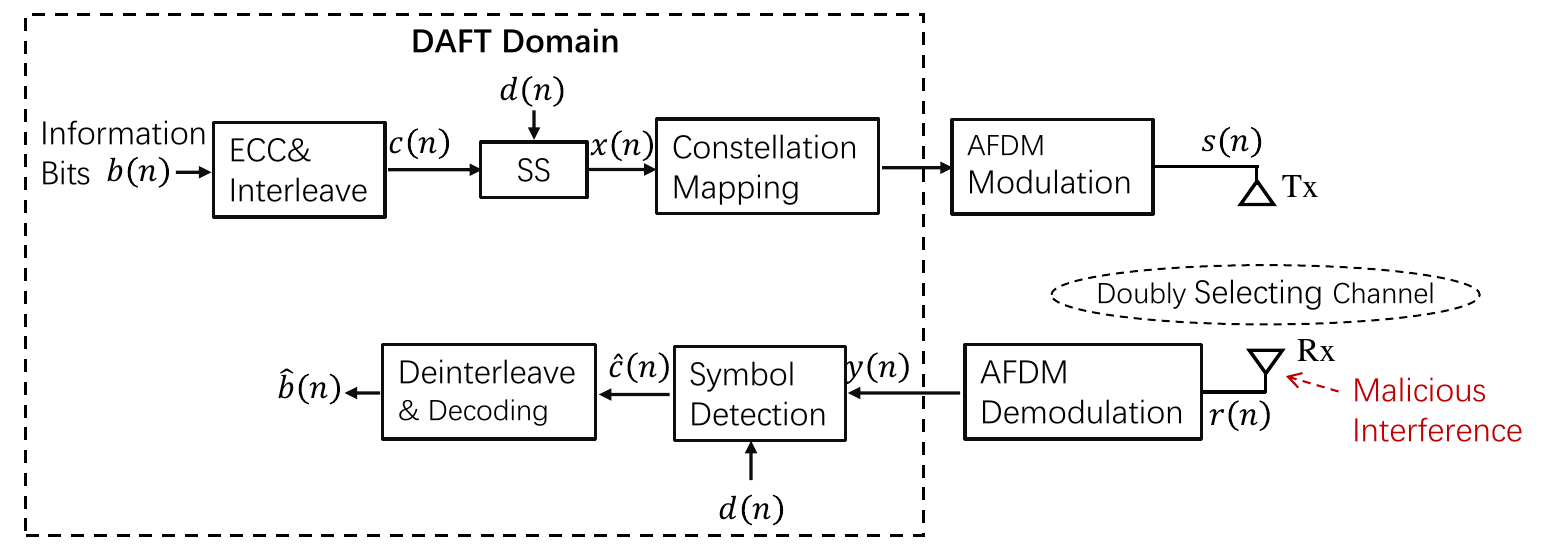}
	\vspace*{-23pt} 
	\caption{The framework of anti-interference AFDM system.  
		\label{fg:FRAMEWORK}} 
	\vspace*{-20pt} 
\end{figure}

\subsection{{{Design Principles of Spreading Sequences for Anti-Interference AFDM System}}}

{{Based on the analysis of interferences in the DAFT domain presented in Section III, we discuss the design principles of spreading sequences from whether the interference samples across DAFT indices are statistically independent.}}

{For broadband interference and matched-slope sweeping interference, the DAFT domain interference samples are shown to be independent across DAFT indices, as shown in Proposition 2 and Proposition 3. In this case, the role of spreading is primarily to distribute each information symbol over multiple AFDM subcarriers. Since no coherent accumulation of interference in the DAFT domain occurs during despreading, the correlation properties of the spreading sequences do not impose any fundamental constraint in this case. As a result, any spreading sequence, including repeated or arbitrary sequences, can be employed to provide interference robustness.}

In contrast, for narrowband interference, tone interference, and non-matched sweeping interference, the DAFT domain interference samples are not independent and exhibit correlation due to phase coupling. For example, in Proposition 2, although $\theta_{m,\rm{w}} \sim U\left(-\pi.\pi\right)$, $\theta_{m,\rm{w}}$ is not fully independent across $m$, leading to correlation in $J_{{\rm{sw}}}^{\rm{A}}\left( m \right)$ across DAFT indices $m$. In such case, inappropriate spreading sequences may lead to coherent accumulation of interference during despreading. To suppress this effect and to maximize the post-despreading signal to interference ratio (SIR) for a given spreading sequence length, it is necessary to ensure coherent accumulation of the desired AFDM signal while simultaneously destroying the correlation structure of the DAFT domain interference, thereby preventing coherent accumulation of the interference. This requires the spreading sequence $d\left[n\right]$ to have good autocorrelation properties, i.e., the autocorrelation function
\begin{equation}
\vspace*{-2pt} 
{
	\label{eq:autocorrelation}
	{{R_d}\left( k \right) = \sum\limits_{n = 0}^{{N_{\rm{d}}-1}} {d\left[ n \right]} {d^*}\left[ {{{\left\langle {n - k} \right\rangle }_{{N_{\rm{d}}}}}} \right]}\text{,} }
\vspace*{-2pt} 
\end{equation}
where $N_{\rm{d}}$ denotes the spreading sequence length, should exhibit low sidelobes, so that the interference cannot accumulate coherently after despreading irrespective of its correlation. In addition, the anti-interference performance depends only on the autocorrelation of $d\left[n\right]$, while the cross-correlation properties are irrelevant.  

{To unify the treatment of DAFT domain interference with both correlated and uncorrelated structures, and to maintain the low-complexity design of the proposed reception in Section V,the proposed anti-interference AFDM system adopts spreading sequences with good autocorrelation properties. These sequences can be chosen from m-sequences generated by Linear Feedback Shift Registers (LFSR) or from single Gold sequences. The design principle ensures that the desired AFDM symbols coherently combine after despreading, while the correlated DAFT domain interference becomes non-coherent, thereby improving SIR and enhancing the overall anti-interference capability. Under this unified treatment, we develop the parameter optimization algorithm based on the classification of interference impacts into stationary and non-stationary categories.}

\subsection{Relationship between Packet Throughput and System Parameters}
To provide a realistic basis for high-mobility scenarios, the relationship between packet throughput and the system parameters in Fig. \ref{fg:FRAMEWORK} needs to be analytically derived under doubly selective channel constraints.
The probability of successful packet transmission can be first given by ${P_{\rm{pc}}} = {P_{\rm{dc}}}^{G}$, where $G = {{{N_{\rm{p}}}} \mathord{\left/
		{\vphantom {{{N_{\rm{p}}}} {{N_{\rm{i}}}}}} \right.
		\kern-\nulldelimiterspace} {{N_{\rm{i}}}}}$ and ${P_{\rm{dc}}}$ is given by
\begin{equation}
\vspace*{-2pt} 
{
	\label{eq:Pdc}
	{P_{\rm{dc}}} = \sum\limits_{k = 0}^{{N_{\rm{e}}}} { \binom{N_{\rm{o}}}{k}} {P_{\rm{e}}}^k{{\left( {1 - {P_{\rm{e}}}} \right)}^{{N_{\rm{o}}} - k}}  \text{,} }
\vspace*{-2pt} 
\end{equation}
where $P_{\rm{e}}$ is the BER before decoding, which is equal to {{the}} BER after despreading. For simplicity, binary phase shift keying (BPSK) and quadrature phase shift keying (QPSK) are taken into account in this paper. In order to achieve the randomization of errors caused by interference, the interleaving depth is designed as a multiple of $N{{\log }_2}\left( {{N_{\rm{m}}}} \right)/{N_{\rm{d}}}$ in the proposed system. {{After deinterleaving, the BERs for stationary and non-stationary interference impacts, $P_{{\rm{e,s}}}$ and $P_{{\rm{e,n}}}$, are given in Proposition 5.}}
\begin{proposition}
	In the proposed anti-interference AFDM system, the BER before decoding under stationary interference impact, i.e., $P_{{\rm{e,s}}}$, can be expressed as
	\begin{equation}
	\vspace*{-2pt} 
	{
		\label{eq:Pes}
		{P_{\rm{e,s}}} = \frac{1}{2} - \frac{1}{2}\Theta\left( {{N_{\rm{d}}}} \right)\text{,} }
	\vspace*{-2pt} 
	\end{equation}
	where 
	\vspace*{-6pt} 
	\begin{flalign}\label{eq:F}
	&\		\Theta \left( {{N_{\rm{d}}}} \right) = \sum\limits_{i = 0}^{L - 1} { \binom{2i}{i}} {\left( {\frac{{{\gamma _{\rm{in}}}}}{{4{N_{\rm{d}}}}}} \right)^i}{\left( {\frac{{{N_{\rm{d}}}}}{{{N_{\rm{d}}} + {\gamma _{\rm{in}}}}}} \right)^{i + 0.5}}  . &
	\end{flalign}
	Whereas, the BER before decoding under non-stationary interference impact, i.e., $P_{{\rm{e,n}}}$, can be expressed as
	\begin{equation}
	\vspace*{-2pt} 
	{
		\label{eq:Pens}
		{P_{\rm{e,n}}} = \frac{1}{2} - \frac{1}{2}\Psi\left( {{N_{\rm{d}}}} \right)\text{,} }
	\vspace*{-2pt} 
	\end{equation}
	where 
	\vspace*{-6pt} 
	\begin{flalign}\label{eq:G1}
	&\		\Psi \left( {{N_{\rm{d}}}} \right) = R{\Psi _2}\left( {{N_{\rm{d}}}} \right) + \left( {1 - R} \right){\Psi _1}\left( {{N_{\rm{d}}}} \right)  , &
	\end{flalign}
	\vspace*{-15pt} 
	\begin{flalign}\label{eq:G}
	&\		{\Psi _1}\left( {{N_{\rm{d}}}} \right) = \sum\limits_{i = 0}^{L - 1}  { \binom{2i}{i}} {\left( {\frac{{{\gamma _{\rm{n}}}}}{{4{N_{\rm{d}}}}}} \right)^i}{\left( {\frac{{{N_{\rm{d}}}}}{{{N_{\rm{d}}} + {\gamma _{\rm{n}}}}}} \right)^{i + 0.5}}  , &
	\end{flalign}
	\vspace*{-10pt} 
	\begin{flalign}
	\label{eq:G2} & \		{\Psi _2}\left( {{N_{\rm{d}}}} \right) = \sum\limits_{i = 0}^{L - 1}  { \binom{2i}{i}}{\left( {\frac{{{\gamma _{\rm{m}}}}}{{4{N_{\rm{d}}}^3}}} \right)^i}{\left( {\frac{{{N_{\rm{d}}}^3}}{{{N_{\rm{d}}}^3 + {\gamma _{\rm{m}}}}}} \right)^{i + 0.5}}  , &
	\end{flalign}
	${\gamma _{\rm{in}}} = {{L{{\log }_2}\left( {{N_{\rm{m}}}} \right)\left( {{P_{\rm{n}}} + {P_{\rm{i}}}} \right)} \mathord{\left/{\vphantom {{L{{\log }_2}\left( {{N_{\rm{m}}}} \right)\left( {{P_{\rm{n}}} + {P_{\rm{i}}}} \right)} {{P_{\rm{s}}}}}} \right.\kern-\nulldelimiterspace} {{P_{\rm{s}}}}}$, ${\gamma _{\rm{n}}} = {{L{{\log }_2}\left( {{N_{\rm{m}}}} \right){P_{\rm{n}}}} \mathord{\left/
			{\vphantom {{L{{\log }_2}\left( {{N_{\rm{m}}}} \right){P_{\rm{n}}}} {{P_{\rm{s}}}}}} \right.
			\kern-\nulldelimiterspace} {{P_{\rm{s}}}}}$, ${\gamma _{\rm{m}}} = {\gamma _{\rm{i}}} + {\gamma _{\rm{n}}}{N_{\rm{d}}}^2$,
	${\gamma _{\rm{i}}} = {{{N}L{{\log }_2}\left( {{N_{\rm{m}}}} \right){P_{\rm{i}}}} \mathord{\left/
			{\vphantom {{{N}L{{\log }_2}\left( {{N_{\rm{m}}}} \right){P_{\rm{i}}}} {{P_{\rm{s}}}}}} \right.
			\kern-\nulldelimiterspace} {{P_{\rm{s}}}}}$, and $R={{{{N_{\rm{d}}}} \mathord{\left/
				{\vphantom {{{N_{\rm{d}}}} {\left[ {N{{\log }_2}\left( {{N_{\rm{m}}}} \right)} \right]}}} \right.
				\kern-\nulldelimiterspace} {\left[ {N{{\log }_2}\left( {{N_{\rm{m}}}} \right)} \right]}}}$.
\end{proposition}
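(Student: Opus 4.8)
The plan is to reduce the pre‑decoding BER, in each of the three regimes, to the classical closed form for $L$‑branch maximal‑ratio combining (MRC) over independent and identically distributed Rayleigh branches. Concretely, I will argue that the decision variable for one coded bit, formed after despreading and combining the $L$ resolvable channel taps, has instantaneous signal‑to‑disturbance ratio $\gamma=\sum_{i=0}^{L-1}\gamma_i$, where the $\gamma_i$ are i.i.d.\ exponential with a common mean $\bar\gamma$ that depends only on the disturbance scenario, so that, conditioned on the fading, the bit‑error event has probability $Q(\sqrt{2\gamma})$ and hence the fading‑averaged BER is $\mathbb{E}\{Q(\sqrt{2\gamma})\}$. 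Evaluating this expectation yields $\tfrac12-\tfrac12\sqrt{\bar\gamma/(1+\bar\gamma)}\sum_{i=0}^{L-1}\binom{2i}{i}[4(1+\bar\gamma)]^{-i}$, and the remainder of the proof is just to (i) read off $\bar\gamma$ for each scenario from the post‑despreading power budget and (ii) recast the sum in the form of $\Theta(N_{\rm d})$, $\Psi_1(N_{\rm d})$ and $\Psi_2(N_{\rm d})$.

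First I would set up the post‑despreading decision statistic for a single information symbol. The cyclic‑shift structure of the matrices $\mathbf{H}_i$ in \eqref{eq:Hi} makes the $L$ channel taps resolvable, since their supports do not overlap, so after correlating with $d[\cdot]$ each tap yields a separate branch; by the impulse‑like autocorrelation \eqref{eq:autocorrelation}, the desired contribution of tap $i$ accumulates \emph{coherently} over the $N_{\rm d}$ chips while the Gaussian noise accumulates incoherently, and---this is the point of the design principle of Section IV‑A together with Propositions 1--4---the DAFT‑domain interference, whether the stationary flat‑level type or the single‑index concentrated type of Proposition 2, does not accumulate coherently either. Normalising the channel power, $\sum_i\mathbb{E}\{|h_i|^2\}=1$ with a uniform profile, makes each $|h_i|^2$ exponential with mean $1/L$, so that $\gamma=\sum_{i=0}^{L-1}\gamma_i$ with i.i.d.\ exponential $\gamma_i$. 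For the stationary impact, Propositions 1, 3 and 4---which also cover multi‑tone and partial‑band jamming---give an interference level $P_{\rm i}$ on every subcarrier that after despreading behaves exactly like extra white noise, so the combined disturbance has power proportional to $P_{\rm n}+P_{\rm i}$; together with the $N_{\rm d}$ spreading gain, the $1/L$ per‑branch factor and the symbol‑to‑bit energy factor $\log_2 N_{\rm m}$ (the only BPSK‑versus‑QPSK difference under Gray mapping) this gives $\bar\gamma=N_{\rm d}/\gamma_{\rm in}$, whence \eqref{eq:Pes}. For the non‑stationary impact, Proposition 2 shows the interference collapses onto a single DAFT index with amplitude $\sqrt{NP_{\rm i}}$; conditioning on whether a given bit's $N_{\rm d}$‑chip despreading window contains that index---which happens with probability $R=N_{\rm d}/[N\log_2 N_{\rm m}]$ as dictated by the frame and interleaver structure---the branch mean is $N_{\rm d}/\gamma_{\rm n}$ in the clean case and $N_{\rm d}^3/\gamma_{\rm m}$ in the hit case, the factor $N$ in $\gamma_{\rm i}$ coming from the $\sqrt{NP_{\rm i}}$ concentration and the $N_{\rm d}^2$ weighting from the coherent signal build‑up against a single interfered chip; the total‑probability average then yields $\Psi=R\Psi_2+(1-R)\Psi_1$ and $P_{\rm e,n}=\tfrac12-\tfrac12\Psi$.

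The remaining analytic step is to evaluate $\mathbb{E}\{Q(\sqrt{2\gamma})\}$. Using Craig's representation $Q(x)=\tfrac1\pi\int_0^{\pi/2}\exp(-x^2/(2\sin^2\theta))\,d\theta$ and the moment generating function of an exponential variable, $\mathbb{E}\{Q(\sqrt{2\gamma})\}=\tfrac1\pi\int_0^{\pi/2}\bigl(\sin^2\theta/(\sin^2\theta+\bar\gamma)\bigr)^{L}\,d\theta$, whose standard finite‑order reduction equals $\tfrac12-\tfrac12\sqrt{\bar\gamma/(1+\bar\gamma)}\sum_{i=0}^{L-1}\binom{2i}{i}[4(1+\bar\gamma)]^{-i}$. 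Substituting $\bar\gamma=N_{\rm d}/\gamma_{\rm in}$ and using the identity $\sqrt{\bar\gamma/(1+\bar\gamma)}\,[4(1+\bar\gamma)]^{-i}=(\gamma_{\rm in}/(4N_{\rm d}))^{i}(N_{\rm d}/(N_{\rm d}+\gamma_{\rm in}))^{i+1/2}$ turns the sum into $\Theta(N_{\rm d})$, giving \eqref{eq:Pes}; the same computation with $\bar\gamma=N_{\rm d}/\gamma_{\rm n}$ and $\bar\gamma=N_{\rm d}^3/\gamma_{\rm m}$ produces $\Psi_1(N_{\rm d})$ and $\Psi_2(N_{\rm d})$, and hence \eqref{eq:Pens}.

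I expect the main obstacle to be step (i)---rigorously identifying the effective per‑branch SNR $\bar\gamma$ in each regime. This requires (a) showing that the $L$ resolvable DAFT‑domain taps genuinely act as $L$ \emph{independent}, exponentially distributed branch SNRs under the correlation‑based combiner of Section V; (b) cleanly separating coherent ($\propto N_{\rm d}^2$) from incoherent ($\propto N_{\rm d}$) accumulation of signal, noise and interference during despreading, which is exactly where the low‑sidelobe autocorrelation property must be invoked so that the correlated DAFT‑domain interference (tone, non‑matched sweep, narrowband) cannot build up; and (c) getting every constant right---the $1/L$ from channel‑power normalisation, the $\log_2 N_{\rm m}$ from the symbol‑to‑bit conversion, and especially the $N$‑ and $N_{\rm d}$‑scalings in the non‑stationary hit case that stem from the single‑index $\sqrt{NP_{\rm i}}$ concentration of Proposition 2 and from the interleaver. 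Once $\bar\gamma$ is fixed in each case, the residual integral and the algebraic rearrangement into the $\binom{2i}{i}(\cdot)^{i}(\cdot)^{i+0.5}$ form are routine.
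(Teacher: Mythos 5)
Your proposal is correct and follows essentially the same route as the paper's Appendix D: model the post-despreading statistic as a coherent signal gain plus a Gaussian disturbance whose variance is read off from Propositions 1--4 (stationary) or from conditioning on whether the despreading window contains the interfered index with probability $R$ (non-stationary), note that the combined channel gain $h_L=\sum_i|h_i|^2$ is Gamma-distributed (equivalently, a sum of i.i.d.\ exponential branch SNRs), and average $Q(\sqrt{2h_L\xi})$ to obtain the standard $L$-branch diversity closed form. The only cosmetic difference is that you derive that closed form via Craig's representation and the exponential MGF, whereas the paper simply cites it from a textbook; your identification of the effective per-branch SNRs $N_{\rm d}/\gamma_{\rm in}$, $N_{\rm d}/\gamma_{\rm n}$, and $N_{\rm d}^3/\gamma_{\rm m}$ matches the paper's.
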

\begin{proof}
	See Appendix \ref{proof_Corollary5}.
\end{proof}

{{For notational convenience and consistency, $F\left( {{N_{\rm{d}}}}\right) $ is used to represent both $\Theta \left( {{N_{\rm{d}}}} \right)$ and $\Psi \left( {{N_{\rm{d}}}} \right)$, which is defined as}}
\begin{align}\label{eq:PeResult}
\vspace*{-2pt} 
F\left( {{N_{\rm{d}}}}\right) = 
\begin {cases}
\Theta \left( {{N_{\rm{d}}}} \right) \quad &{C.2} \\
\Psi \left( {{N_{\rm{d}}}} \right)  \quad &{C.3} \\
\end{cases},
\vspace*{-2pt} 
\end{align}
where $C.2$ is the constraint that interference is one of broadband, narrowband, single-tone, or sweeping interference (with frequency modulation slope differing from that of AFDM subcarrier), 
$C.3$ denotes the constraint that interference is sweeping interference with a frequency modulation slope consistent with that of AFDM subcarrier.	

From (\ref{eq:PT_de}), packet throughput of the anti-interference AFDM system can thereby be given by Corollary 1.
\begin{corollary}
	The relationship between packet throughput and the system {{parameters}} can be indicated by
	\begin{align}\label{eq:PT}
	\vspace*{-2pt} 
	&\ \eta  = \frac{{{{\left[ {\sum\limits_{k = 0}^{{N_{\rm{e}}}} { \binom{N_{\rm{o}}}{k}} {{\left( {1 - F\left( {{N_{\rm{d}}}} \right)} \right)}^k}{{\left( {1 + F\left( {{N_{\rm{d}}}} \right)} \right)}^{{N_{\rm{o}}} - k}}} \right]}^G}}}{{{2^{GN_{\rm{o}}}}K{N_{\rm{d}}}}}, &
	\vspace*{-2pt} 
	\end{align}
	where ${{K = {N_{\rm{p}}}{N_{\rm{o}}}\left( {N + {N_{\rm{cp}}}} \right)} \mathord{\left/
			{\vphantom {{K = {N_{\rm{p}}}{N_{\rm{o}}}\left( {N + {N_{\rm{cp}}}} \right)} {\left( {N{N_{\rm{i}}}{B_{\rm{c}}}{{\log }_2}\left( {{N_{\rm{m}}}} \right)} \right)}}} \right.
			\kern-\nulldelimiterspace} {\left[ {N{N_{\rm{i}}}{B_{\rm{c}}}{{\log }_2}\left( {{N_{\rm{m}}}} \right)} \right]}}$.
\end{corollary}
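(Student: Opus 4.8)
The plan is to assemble the claimed closed form by substituting the quantities already established into the definition $\eta = P_{\rm{pc}}/T_{\rm{p}}$ of (\ref{eq:PT_de}), and then to identify the packet transmission time $T_{\rm{p}}$ with $KN_{\rm{d}}$ through a resource-accounting argument. First I would use $P_{\rm{pc}} = P_{\rm{dc}}^{G}$ with $G = N_{\rm{p}}/N_{\rm{i}}$ together with the codeword decoding success probability (\ref{eq:Pdc}), inserting the before-decoding BER $P_{\rm{e}} = \frac{1}{2} - \frac{1}{2}F(N_{\rm{d}})$ supplied by Proposition 5 and (\ref{eq:PeResult}). Since $P_{\rm{e}}^{k} = 2^{-k}(1 - F(N_{\rm{d}}))^{k}$ and $(1-P_{\rm{e}})^{N_{\rm{o}}-k} = 2^{-(N_{\rm{o}}-k)}(1 + F(N_{\rm{d}}))^{N_{\rm{o}}-k}$, each summand in (\ref{eq:Pdc}) carries the common factor $2^{-N_{\rm{o}}}$, so $P_{\rm{dc}} = 2^{-N_{\rm{o}}}\sum_{k=0}^{N_{\rm{e}}}\binom{N_{\rm{o}}}{k}(1-F(N_{\rm{d}}))^{k}(1+F(N_{\rm{d}}))^{N_{\rm{o}}-k}$; raising to the $G$-th power produces exactly the numerator of (\ref{eq:PT}) together with the prefactor $2^{-GN_{\rm{o}}}$.

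Next I would compute $T_{\rm{p}}$ by tracing a packet through the transmit chain. A packet holds $N_{\rm{p}}$ information bits, which form $G = N_{\rm{p}}/N_{\rm{i}}$ ECC codewords of $N_{\rm{o}}$ coded bits each, hence $GN_{\rm{o}} = N_{\rm{p}}N_{\rm{o}}/N_{\rm{i}}$ coded bits in total. Mapping these onto the $N_{\rm{m}}$-ary constellation gives $N_{\rm{p}}N_{\rm{o}}/(N_{\rm{i}}\log_2 N_{\rm{m}})$ data symbols; DAFT-domain spreading by a length-$N_{\rm{d}}$ sequence expands each symbol into $N_{\rm{d}}$ chips, and placing $N$ chips per AFDM block yields $N_{\rm{d}}N_{\rm{p}}N_{\rm{o}}/(NN_{\rm{i}}\log_2 N_{\rm{m}})$ blocks. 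Each block together with its CPP occupies $N + N_{\rm{cp}}$ samples, and at occupied bandwidth $B_{\rm{c}}$ one sample lasts $1/B_{\rm{c}}$, so $T_{\rm{p}} = N_{\rm{d}}N_{\rm{p}}N_{\rm{o}}(N+N_{\rm{cp}})/(NN_{\rm{i}}B_{\rm{c}}\log_2 N_{\rm{m}}) = KN_{\rm{d}}$ with $K$ precisely as in the statement. Dividing the simplified $P_{\rm{pc}}$ by $KN_{\rm{d}}$ then yields (\ref{eq:PT}).

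The simplification of $P_{\rm{pc}}$ is routine bookkeeping once Proposition 5 is invoked, so the only delicate point is justifying $T_{\rm{p}} = KN_{\rm{d}}$: one must verify that the interleaving, whose depth is a multiple of $N\log_2 N_{\rm{m}}/N_{\rm{d}}$, merely permutes chips and adds no transmission overhead, that the length-$N_{\rm{cp}}$ CPP is the sole per-block redundancy, and that critical sampling at rate $B_{\rm{c}}$ is assumed so the sample duration is $1/B_{\rm{c}}$. Once these modeling conventions are fixed the accounting is exact and the corollary follows.
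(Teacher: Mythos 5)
Your proposal is correct and follows exactly the route the paper intends: the paper states Corollary 1 without a written proof, simply as a substitution of $P_{\rm{pc}}=P_{\rm{dc}}^{G}$ with $P_{\rm{e}}=\tfrac{1}{2}-\tfrac{1}{2}F(N_{\rm{d}})$ from Proposition 5 into $\eta=P_{\rm{pc}}/T_{\rm{p}}$, and your algebra (the factor $2^{-GN_{\rm{o}}}$) and your resource accounting giving $T_{\rm{p}}=KN_{\rm{d}}$ both check out against the stated $K$. Your explicit justification of $T_{\rm{p}}=KN_{\rm{d}}$ is in fact more careful than what the paper provides.
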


\subsection{Parameter Optimization Algorithm of Anti-interference System}

Considering the limited flexibility in ECC parameters and the significant impact of $N_{\rm{d}}$ on $\eta$, parameter optimization is primarily oriented toward optimizing $N_{\rm{d}}$. The function $\eta$ is clearly convex with respect to $N_{\rm{d}}$. Then, the optimal $N_{\rm{d}}$ that maximizes $\eta$ satisfies 
\begin{equation}
\vspace*{-2pt} 
{
	\label{eq:d0}
	U\left(N_{\rm{d}}\right)=\frac{{d\eta }}{{d{N_{\rm{d}}}}} = 0  \text{.} }
\vspace*{-2pt} 
\end{equation}
Combining (\ref{eq:PT}), (\ref{eq:Pdc}) and (\ref{eq:PeResult}),  $U\left(N_{\rm{d}}\right)$ is reformulated as
\begin{align}\label{eq:U}
\vspace*{-2pt} 
U\left( {{N_{\rm{d}}}} \right)=&\frac{{{F_1}\left( {{N_{\rm{d}}}} \right){N_{\rm{p}}}{N_{\rm{d}}}{N_{\rm{o}}}{ \binom{N_{\rm{o}}-1}{N_{\rm{e}}}}}}{{{N_{\rm{i}}}}}\nonumber\\
&{\cdot}\sum\limits_{k = 0}^{{N_{\rm{e}}}} {{ \binom{N_{\rm{o}}}{k}}} {\left[ {1 - F\left( {{N_{\rm{d}}}} \right)} \right]^{k - {N_{\rm{e}}}}}{\left[ {1 + F\left( {{N_{\rm{d}}}} \right)} \right]^{1 - k + {N_{\rm{e}}}}}
\text{,}
\vspace*{-2pt} 
\end{align}
where ${F_1}\left( {{N_{\rm{d}}}} \right)$ denotes the first derivative of ${F}\left( {{N_{\rm{d}}}} \right) $ with respect to $N_{\rm{d}}$. Objective (\ref{eq:U}) is evidently a nonlinear equation, which poses significant {{challenges}} for obtaining an analytical solution. Thus, we propose a parameter optimization algorithm based on {{the}} bisection method and Newton’s method. \textit{First}, to ensure convergence speed, the first-order convergent bisection method is employed to determine a suitable initial point for the second-order convergent Newton's method. 
{{Because $\eta$ increases with $N_{\rm{d}}$ when the despreading signal to interference plus noise ratio (SINR) is low and decreases when $N_{\rm{d}}$ becomes excessively large, the optimal $N_{\rm{d}}$ corresponds to the SINR interval $\left[-9,3\right]$ dB \cite{FWC}. Consistent with prior works such as \cite{chow2002practical}, we assume that the noise power $P_{\rm{n}}$ and the interference power $P_{\rm{i}}$ are available for the purpose of parameter optimization. Thus,}}
the feasible range of $N_{\rm{d}}$ is defined as
\begin{align}\label{eq:bisection}
\vspace*{-2pt} 
{\mathbb{B}} = 
\begin {cases}
\left[ {\left\lfloor {\frac{{{\gamma _{\rm{in}}}}}{8}} \right\rfloor ,\left\lceil {2{\gamma _{\rm{in}}}} \right\rceil } \right] \quad &{C.2} \\
\left[ {\left\lfloor {\frac{{{\gamma _{\rm{n}}}}}{8}} \right\rfloor ,\left\lceil {2{\gamma _{\rm{n}}}} \right\rceil } \right]  \quad &{C.3} \\
\end{cases}.
\vspace*{-2pt} 
\end{align}
\textit{Second}, the change in convergence direction is adopted as the stopping criterion, instead of relying on a predefined threshold. 
To utilize Newton's method, the first derivative of ${U}\left( {{N_{\rm{d}}}} \right) $ with respect to $N_{\rm{d}}$ as (\ref{eq:dU}) is given on the next page,
\begin{figure*}[ht]
	\vspace*{-10pt} 
	\small
	\begin{flalign}\label{eq:dU}
	\vspace*{-2pt} 
	&\ \frac{{dU\left( {{N_{\rm{d}}}} \right)}}{{d{N_{\rm{d}}}}} = \frac{{{N_{\rm{p}}}{N_{\rm{o}}}{ \binom{N_{\rm{o}}-1}{N_{\rm{e}}}}}}{{{N_{\rm{i}}}}}\left[ {{N_{\rm{d}}}{F_1}\left( {{N_{\rm{d}}}} \right) + {F_2}\left( {{N_{\rm{d}}}} \right)} \right] &\nonumber\\
	&\ \quad \quad \quad \quad \quad {+} \sum\limits_{k = 0}^{{N_{\rm{e}}}} {{ \binom{N_{\rm{o}}}{k}}} {\left[ {1 - F\left( {{N_{\rm{d}}}} \right)} \right]^{k - {N_{\rm{e}}} - 1}}{\left[ {1 + F\left( {{N_{\rm{d}}}} \right)} \right]^{1 - k + {N_{\rm{e}}} - 1}}\left[ {\left( {1 - 2k} \right)F\left( {{N_{\rm{d}}}} \right){F_1}\left( {{N_{\rm{d}}}} \right) - \left( {1 - 2k + 2{N_{\rm{e}}}} \right){F_1}\left( {{N_{\rm{d}}}} \right)} \right]. &
	\vspace*{-2pt} 
	\end{flalign}
	\hrule
	\vspace*{-10pt} 
\end{figure*}
where ${F_2}\left( {{N_{\rm{d}}}} \right)$ denotes the second derivative of ${F}\left( {{N_{\rm{d}}}} \right) $ with respect to $N_{\rm{d}}$. ${F_1}\left( {{N_{\rm{d}}}} \right)$ and ${F_2}\left( {{N_{\rm{d}}}} \right)$ is given by 
\begin{align}\label{eq:Fi}
\vspace*{-2pt} 
{F_l}\left( {{N_{\rm{d}}}} \right) = 
\begin {cases}
\sum\limits_{i = 0}^{L - 1} { \binom{2i}{i}} {\left( {\frac{{{\gamma _{\rm{in}}}}}{4}} \right)^i}{\beta _l}\left( {{N_{\rm{d}}},i} \right) \quad &{C.2} \\
\frac{1}{{N{{\log }_2}\left( {{N_{\rm{m}}}} \right)}}\sum\limits_{i = 0}^{L - 1} { \binom{2i}{i}}{z _l}\left( {{N_{\rm{d}}},i} \right)    &{C.3} \\
\end{cases},
\vspace*{-2pt} 
\end{align}
where ${z_l}\left( {{N_{\rm{d}}},i} \right) = {\left( {\frac{{{\gamma _{\rm{n}}}}}{4}} \right)^i}{\phi _l}\left( {{N_{\rm{d}}},i} \right) + {\mu _l}\left( {{N_{\rm{d}}},i} \right)$, $l=1,2$,
\vspace*{-2pt} 
\begin{flalign}\label{eq:beta1}
&\	{\beta _1}\left( {{N_{\rm{d}}},i} \right) = \frac{{{\gamma _{\rm{in}}} - 2i{N_{\rm{d}}}}}{{2{N_{\rm{d}}}^{0.5}{{\left( {{N_{\rm{d}}} + {\gamma _{\rm{in}}}} \right)}^{i + 1.5}}}}  , &
\end{flalign}	
\vspace*{-10pt} 
\begin{flalign}\label{eq:beta2}
&\	{\beta _2}\left( {{N_{\rm{d}}},i} \right) = \frac{{4\left( {i + {i^2}} \right)N_{\rm{d}}^2 - 4\left( {i + 1} \right){\gamma _{\rm{in}}}{N_{\rm{d}}} - {\gamma _{\rm{in}}}^2}}{{4N_{\rm{d}}^{1.5}{{\left( {{N_{\rm{d}}} + {\gamma _{\rm{in}}}} \right)}^{i + 2.5}}}}, &
\end{flalign}
\vspace*{-15pt} 
\begin{flalign}\label{eq:phi1}
&\	{\phi _1}\left( {{N_{\rm{d}}},i} \right) = {\left( {{N_{\rm{d}}} + {\gamma _{\rm{n}}}} \right)^{ - \left( {i + 1.5} \right)}}\psi \left( {{N_{\rm{d}}},i} \right)  , &
\end{flalign}
\vspace*{-25pt} 
\begin{flalign}\label{eq:phi11}
&\ \psi \left( {{N_{\rm{d}}},i} \right) = \left( {i - 1} \right){N_{\rm{d}}}^{1.5} - \left[ {iN{{\log }_2}\left( {{N_{\rm{m}}}} \right) + \frac{3}{2}{\gamma _{\rm{n}}}} \right]{N_{\rm{d}}}^{0.5} & \nonumber\\
&\ \quad \quad \quad \quad \quad {+}\frac{{{\gamma _{\rm{n}}}N{{\log }_2}\left( {{N_{\rm{m}}}} \right)}}{2}{N_{\rm{d}}}^{ - 0.5},&
\end{flalign}
and the definitions of ${\phi _2}\left( {{N_{\rm{d}}},i} \right)$, ${\mu _1}\left( {{N_{\rm{d}}},i} \right)$ and ${\mu _2}\left( {{N_{\rm{d}}},i} \right)$ are provided in (\ref{eq:phi2}), (\ref{eq:mu1}) and (\ref{eq:mu2}) on the next page, respectively. Then the update of $N_{\rm{d}}$ can be expressed as
\begin{equation}
\vspace*{-2pt} 
{
	\label{eq:Ndn}
	{N_{{\rm{d}},n + 1}} = {N_{{\rm{d}},n}} + D\left( {{N_{{\rm{d}},n}}} \right)\text{,} }
\vspace*{-2pt} 
\end{equation}
where 
\begin{equation}
\vspace*{-2pt} 
{
	\label{eq:Delta_Nd}
	D\left( {{N_{{\rm{d}},n}}} \right)=   - \left\lceil {U\left( {{N_{{\rm{d}},n}}} \right)\frac{{d{N_{\rm{d}}}}}{{dU\left( {{N_{\rm{d}}}} \right)}}{|_{{N_{\rm{d}}} = {N_{{\rm{d}},n}}}}} \right\rceil \text{.} }
\vspace*{-2pt} 
\end{equation}
\begin{figure*}[htbp]
	\vspace*{-10pt} 
	\small
	\begin{flalign}\label{eq:phi2}
	\vspace*{-2pt} 
	&\ {\phi _2}\left( {{N_{\rm{d}}},i} \right) = \frac{{{{\left( {{N_{\rm{d}}} + {\gamma _{\rm{n}}}} \right)}^{ - \left( {i + 1.5} \right)}}}}{4} \Bigg\{ {{ - \left( {4{i^2} + 8i + 6} \right){N_{\rm{d}}}^{1.5} + \left[ {4\left( {{i^2} + i} \right)N{{\log }_2}\left( {{N_{\rm{m}}}} \right) + 12i + 6\left( {1 - {\gamma _{\rm{n}}}} \right)} \right]{N_{\rm{d}}}^{0.5}}} &\nonumber\\
	&\ \quad \quad \quad \quad \quad \quad \quad \quad \quad \quad \quad \quad \quad \quad {{ - \left[ {4\left( {i + 1} \right)N{{\log }_2}\left( {{N_{\rm{m}}}} \right) - 12i - 6 + 3{\gamma _{\rm{n}}}} \right]{\gamma _{\rm{n}}}{N_{\rm{d}}}^{ - 0.5} - {\gamma _{\rm{n}}}^2N{{\log }_2}\left( {{N_{\rm{m}}}} \right){N_{\rm{d}}}^{ - 1.5}}} \Bigg\}.&
	\vspace*{-2pt} 
	\end{flalign}
	\vspace*{-10pt} 
\end{figure*}
\begin{figure*}[htbp]
	\vspace*{-10pt} 
	\small
	\begin{flalign}\label{eq:mu1}
	\vspace*{-2pt} 
	&\ {\mu _1}\left( {{N_{\rm{d}}},i} \right) = \frac{{{{\left( {{N_{\rm{d}}}^3 + {\gamma _{\rm{i}}} + {\gamma _{\rm{n}}}{N_{\rm{d}}}^2} \right)}^{ - \left( {i + 1.5} \right)}}}}{8}{\left( {\frac{{{\gamma _{\rm{i}}} + {\gamma _{\rm{n}}}{N_{\rm{d}}}^2}}{4}} \right)^{i - 1}} &\nonumber\\
	&\ \quad \quad \quad \quad \quad \quad  {\cdot}\left[ {\left( {2 - 2i} \right){\gamma _{\rm{n}}}{N_{\rm{d}}}^{6.5} + 3{\gamma _{\rm{n}}}^2{N_{\rm{d}}}^{5.5} + \left( {2 - 6i} \right){\gamma _{\rm{i}}}{N_{\rm{d}}}^{4.5} + 8{\gamma _{\rm{i}}}{\gamma _{\rm{n}}}{N_{\rm{d}}}^{3.5} + 5{\gamma _{\rm{i}}}^2{N_{\rm{d}}}^{1.5}} \right].&
	\vspace*{-2pt} 
	\end{flalign}
	\vspace*{-10pt} 
\end{figure*}
\begin{figure*}[htbp]
	\vspace*{-10pt} 
	\small
	\begin{flalign}\label{eq:mu2}
	\vspace*{-2pt} 
	&\ {\mu _2}\left( {{N_{\rm{d}}},i} \right) =  {\left( {\frac{{{\gamma _{\rm{i}}} + {\gamma _{\rm{n}}}{N_{\rm{d}}}^2}}{4}} \right)^i}{\left( {{N_{\rm{d}}}^3 + {\gamma _{\rm{i}}} + {\gamma _{\rm{n}}}{N_{\rm{d}}}^2} \right)^{ - \left( {i + 0.5} \right)}}{N_{\rm{d}}}^{2.5} & \nonumber\\
	&\ \quad \quad \quad \quad \quad \quad{\cdot}\Bigg[{\frac{{i{\gamma _{\rm{n}}}({\gamma _{\rm{i}}} - {\gamma _{\rm{n}}}N_{\rm{d}}^2)}}{{{{({\gamma _{\rm{i}}} + {\gamma _{\rm{n}}}N_{\rm{d}}^2)}^2}}} - \frac{5}{{2N_{\rm{d}}^2}} + \frac{{(2i + 1)\left( {3N_{\rm{d}}^4 + 4{\gamma _{\rm{n}}}N_{\rm{d}}^3 + 2{\gamma _{\rm{n}}}^2{N_{\rm{d}}}^2 - 6{\gamma _{\rm{i}}}{N_{\rm{d}}} - 2{\gamma _{\rm{i}}}{\gamma _{\rm{n}}}} \right)}}{{2{{(N_{\rm{d}}^3 + {\gamma _{\rm{i}}} + {\gamma _{\rm{n}}}N_{\rm{d}}^2)}^2}}}} & \nonumber\\
	&\ \quad \quad \quad \quad \quad \quad \quad  +{  {{\left( {\frac{{i{\gamma _{\rm{n}}}{N_{\rm{d}}}}}{{{\gamma _{\rm{i}}} + {\gamma _{\rm{n}}}N_{\rm{d}}^2}} + \frac{5}{{2{N_{\rm{d}}}}} - \frac{{(2i + 1)(3N_{\rm{d}}^2 + 2{\gamma _{\rm{n}}}{N_{\rm{d}}})}}{{2(N_{\rm{d}}^3 + {\gamma _{\rm{i}}} + {\gamma _{\rm{n}}}N_{\rm{d}}^2)}}} \right)}^2}} \Bigg]. &
	\vspace*{-2pt} 
	\end{flalign}
	\hrule
	\vspace*{-10pt} 
\end{figure*}

\begin{algorithm}[tb]
	\caption{ Packet throughput-guided parameter optimization algorithm }\label{algo:Design_Nd}

	\textbf{Iuput}: ${{\gamma _{\rm{n}}}}$, ${{\gamma _{\rm{i}}}}$, ${{\gamma _{\rm{in}}}}$, $N$, $N_{\rm{cp}}$, $N_{m}$, $N_{p}$, $N_{i}$, $N_{o}$, $N_{e}$.
	
	\textbf{Initialize}: Let $n=0$. Set $N_{d,0}$ based on the bisection method and (\ref{eq:bisection}).

	\quad \textbf{repeat}
	
	\qquad 1: \ Compute ${F_1}\left( {{N_{{\rm{d}},n}}} \right)$ and ${F_2}\left( {{N_{{\rm{d}},n}}} \right)$ based on (\ref{eq:Fi}) to (\ref{eq:mu2}).
	
	\qquad 2: \ Calculate ${U\left( {{N_{{\rm{d}},n}}} \right)}$ and $\frac{{d{N_{\rm{d}}}}}{{dU\left( {{N_{\rm{d}}}} \right)}}{|_{{N_{\rm{d}}} = {N_{{\rm{d}},n}}}}$ based on (\ref{eq:U}) and (\ref{eq:dU}), respectively.
	
	\qquad 3: \ Derive $D\left( {{N_{{\rm{d}},n}}} \right)$ by (\ref{eq:Delta_Nd}).
	
	\qquad 4: \ Determine whether the iteration direction has changed based on $\Delta D = D\left( {{N_{{\rm{d}},n}}} \right)D\left( {{N_{{\rm{d}},n - 1}}} \right)$.
	
	\qquad 5: \ Update ${N_{{\rm{d}},n + 1}}$ using (\ref{eq:Ndn}).
	
	\qquad 6: \ Set ${n=n+1}$.
	
	\quad \textbf{until}  $\Delta D \le 0$.
	
	\textbf{Output}: Optimized $N_{\rm{d}}={N_{{\rm{d}},n-1}}$.	
	
\end{algorithm} 
Consequently, the procedure to optimize $N_{\rm{d}}$ is summarized as Algorithm \ref{algo:Design_Nd}.

\section{Low-Complexity Reception Design}
In this section, a linear-complexity CDD is developed for the proposed anti-interference AFDM system which could achieve full diversity gain.
By jointly utilizing the impulse-like autocorrelation function of spreading sequences and the cyclic-shift property of AFDM input-output relation, each path can be equalized individually without inter-path interference. This enables us to develop a linear-complexity CDD by utilizing correlation-based processing to avoid the matrix inversion operation.

\subsection{Correlation-Based DAFT Domain Detector}

Let $k_i=\alpha_{{i}}+a_{{i}}$ where $\alpha_{{i}}$ is its integer part whereas $a_{\rm{i}}$ is the fractional part satisfying $-1/2 <a_{{i}} \le 1/2$. Channel matrix (\ref{eq:Hi}) can be distinguished between two cases, namely integer Doppler shift and fractional Doppler shift \cite{AFDM_TWC}. 

\textit{i) Integer Doppler shifts:}
\begin{align}\label{eq:Hi_inte}
{\bf{H}}_{i}[p,q]=&e^{j\frac{2\pi}{N}(N c_{1}l_{i}^{2}-ql_{i}+N c_{2}(q^{2}-p^{2}))} \nonumber \\  &
\cdot \delta[ \langle p+loc_i \rangle_{N} - q],
\end{align}
where $loc_{i}\triangleq\langle 2N c_1 l_i - \alpha_{{i}} \rangle_{N}$.

\textit{ii) Fractional Doppler shifts:}
\begin{align}\label{eq:H_I_frac}
\vspace*{-2pt} 
\left| {{{\bf{H}}_i}[p,q]} \right|  = 
\begin {cases}
\left| {\frac{{\sin \left( {N\theta } \right)}}{{N\sin \left( \theta  \right)}}} \right|  &{q \in {\mathbb{K}\left(p\right)}} \\
0    &{otherwise} \\
\end{cases},
\vspace*{-2pt} 
\end{align}
where ${\mathbb{K}\left(p\right)}=\left[ {{{\left\langle {p + lo{c_i} - {k_\nu }} \right\rangle }_N},{{\left\langle {p + lo{c_i} + {k_\nu }} \right\rangle }_N}} \right]$,$k_v\in \mathbb{N}$ denotes the considered Doppler spread range, and $\theta  \buildrel \Delta \over = \frac{\pi }{N}\left( {p - q + 2N{c_1}{l_i} - {k_i}} \right)$.

Observing (\ref{eq:Hi_inte}) and (\ref{eq:H_I_frac}), under both integer and fractional Doppler shift cases, the received signal is composed of multiple cyclically shifted copies of the transmitted signal. This phenomenon illustrates the cyclic-shift property of AFDM input-output relation. Jointly utilizing  this cyclic-shift property and impulse-like autocorrelation function of spreading sequences, we develop a correlation-based DAFT domain detector. The proposed CDD consists of equalization and despreading, which are detailed in the following.

For notational consistency, (\ref{eq:H_I_frac}) is used to represent the channel matrix under both integer and fractional Doppler shift cases in this paper. Specifically, the integer Doppler shift case corresponds to $k_{\nu}=0$. Then correlation-based equalization could be described as 
\begin{align}\label{eq:dce_fr}
{{\widehat{\bf{x}}}}_{\rm{d}}=\sum_{i=0}^{L-1}\frac{1}{{N}}h_{i}^{*}{\cdot} \sum_{k=-k_{\nu}}^{k_{\nu}}{\bf{\Pi}}^{\left(loc_{i}+k\right)}\left(\bm{\pi}_{i,k}\odot{\bf{{y}}}\right) ,
\end{align}
where ${\bf{\Pi }}\in {\mathbb{C}^{N \times N}}$ is the forward cyclic-shift matrix
\begin{equation}
{\bf{\Pi }}  = \left[ {\begin{array}{*{20}{c}}
	0&{...}&0&1\\
	1&{...}&0&0\\
	\vdots & \ddots & \ddots & \vdots \\
	0&{...}&1&0
	\end{array}} \right],
\end{equation}
and $\bm{\pi}_{i,k} \in \mathbb{C}^{{N}\times 1}$ is the phase vector whose elements is 
\begin{align}
\bm{\pi}_{i,k}[p] =& e^{-j\frac{2\pi}{{N}}\left(Nc_1l_i^2-\langle p+loc_{i}+k \rangle_{{N}}l_{i}+{N} c_{2}\left(\left( \langle p+loc_{i}+k \rangle_{{N}}\right)^{2}-p^{2} \right) \right)}
\nonumber \\  &
\cdot   \left(\frac{1-e^{-j2\pi(p-\langle p+loc_{i}+k \rangle_{{N}}-a_{i}+loc_{i})}}{1-e^{-j\frac{2\pi}{{N}}\left(p-\langle p+loc_{i}+k \rangle_{{N}}-a_{\rm{i}}+loc_{i} \right)}}\right)^{*}.
\end{align}

After obtaining ${{\widehat{\bf{x}}}}_{\rm{d}}$ by (\ref{eq:dce_fr}), the despreading can be performed by
\begin{equation}\label{eq:DesDDCD}
{\widehat {\bf{c}}_{\rm{d}}} = {\left( {{\bf{d}}_{\rm{s}}^{\rm{H}}{\widehat {\bf{X}}}_{\rm{d}}} \right)^{\rm{T}}},
\end{equation}
where ${\bf{d}}_{\rm{s}}$ is the vector of spreading sequence under constellation mapping, ${{\widehat {\bf{X}}}_{\rm{d}}}$ is the reshaped matrix with dimensions determined by the length of the spreading sequence ${\bf{d}}_{\rm{s}}$. {{Subsequently, constellation-based demodulation can be applied to ${\widehat {\bf{c}}_{\rm{d}}}$.}}

It is worth noting that the despreading in (\ref{eq:DesDDCD}) also plays a key role in suppressing inter-path interference. For clarity, we illustrate this principle using a two-path ($L=2$) channel example, which can be easily extended to the case where $L\ge 3$. In the case of $L=2$, the noiseless received signal after equalization in (\ref{eq:dce_fr}) can be expressed as
\begin{align}\label{eq:Le2}
{\widehat {\bf{x}}_{\rm{d}}}=&\frac{{{{\left| {{h_1}} \right|}^2} + {{\left| {{h_2}} \right|}^2}}}{{{N^2}}}{\bf{x}} + \frac{{{h_1}^*{h_2}}}{{{N^2}}}{{\bf{\Pi }}^{lo{c_1} - lo{c_2}}}\left( {{\bm{\pi} _\Delta }^* \odot {\bf{x}}} \right)  \nonumber \\  
& + \frac{{{h_1}{h_2}^*}}{{{N^2}}}{{\bf{\Pi }}^{lo{c_2} - lo{c_1}}}\left( {{\bm{\pi} _\Delta } \odot {\bf{x}}} \right),
\end{align}				
where ${\bm{\pi} _\Delta }={\bm{\pi} _{1} }{\bm{\pi} _{2} }^*$. The terms $\frac{{{h_1}^*{h_2}}}{{{N^2}}}{{\bf{\Pi }}^{lo{c_1} - lo{c_2}}}\left( {{\bm{\pi} _\Delta }^* \odot {\bf{x}}} \right)$ and $\frac{{{h_1}{h_2}^*}}{{{N^2}}}{{\bf{\Pi }}^{lo{c_2} - lo{c_1}}}\left( {{\bm{\pi} _\Delta } \odot {\bf{x}}} \right)$ imply inter-path interference which are essentially cyclically shifted replicas of ${\bf{x}}$. The magnitudes of these terms after despreading are determined by the autocorrelation function of the spreading sequence. Since the autocorrelation sidelobes of the spreading sequence satisfy $R_d\left[k\right] \ll 1$ for $k\ne 0$, the resulting inter-path interference is effectively suppressed after despreading, while the original transmitted symbols are preserved. 

\subsection{Computational Complexity Analyses}
The computational complexity of CDD, as represented by (\ref{eq:dce_fr}) and (\ref{eq:DesDDCD}), is evaluated to be ${\mathcal {O}}\left( {k_vL{N}} \right)$. As contrast, the MMSE-based symbol detection method requires ${\mathcal {O}}\left( {N}^{3} \right)$ flops \cite{AFDM_TWC}.

\section{Numerical Simulations and Analyses}

In this section, we present the numerical results of the proposed anti-interference AFDM system. The simulation parameters are established based on 5G New Radio (NR) standard \cite{3GPP} and are summarized in Table \ref{tab:parameter_simu}. In the legends of this section, {{`TI' refers to tone interference,}} `STI' refers to single-tone interference, `SWI' refers to sweeping interference, `BBI' refers to broadband interference, and `NBI' refers to narrowband interference.
\begin{table}
	\centering
	\caption{Simulations Parameters}\label{tab:parameter_simu}
	\renewcommand\arraystretch{1.3}
	\begin{threeparttable}
		{
			\begin{tabular}{|c|l|c|}
				\hline
				\textbf{Symbol}                       &\textbf{Parameter}                       & \textbf{Value}     \\ \hline
				$f_{\rm{c}}$ &  Carrier frequency                      & $24$ GHz     \\ \hline
				${B}$ &  Bandwidth                       & $122.88$ MHz  \\ \hline
				$\Delta _f$ &  Subcarrier spacing                       & $15$ kHz  \\ \hline
				$N$ &  Number of subcarriers                      & $992$     \\ \hline
				$N_{\rm{cp}}$ &  Number of chirp-periodic prefix                     & $69$     \\ \hline
				$N_{\rm{m}}$ &  Modulation order of constellation diagram                      & $4$     \\ \hline	
				$N_{\rm{p}}$ &  Number of bits per packet                      & $544$     \\ \hline								
			\end{tabular}
		}
	\end{threeparttable}
\end{table}

Firstly, the accuracies of the impact analyses are verified through relative errors between the closed-form expressions in Section III and the results obtained by directly applying the DAFT. Considering the approximations inherent in the analyses of both STI and SWI, we conduct Monte Carlo simulations with $10^6$ trials to validate the analytical results under these two types of interference. In each trial, all interference signal parameters (carrier frequency, initial phase, and {{frequency}} modulation slope) are randomly generated from uniform distributions.
The statistical results are summarized in Table  \ref{tab:RelativeError}. Relative errors of STI and SWI impact, consistently below $-80$ dB, sufficiently validate the {{accuracy}} of the impact analyses.
\begin{table}
	\centering
	\caption{Statistical Results of Interference Impacts Analyses}\label{tab:RelativeError}
	\renewcommand\arraystretch{1.3}
	\begin{threeparttable}
		{
			\begin{tabular}{|c|l|c|}
				\hline
				\textbf{Item}                       &\textbf{Value}                           \\ \hline
				Relative error of STI impact in the DAFT domain &   $-117.02$ dB                       \\ \hline
				Relative error of SWI impact in the DAFT domain &    $-85.66$ dB                         \\ \hline
			\end{tabular}
			
		}
	\end{threeparttable}
\end{table}

Then, we show the packet throughput performance of the proposed anti-interference AFDM system. As \cite{AFDM_TWC}, we consider a channel with $L=3$ paths. The maximum relative velocity is $1000$ km/h \cite{IMT2021White}, {{which corresponds to a maximum Doppler shift of $22.22$ kHz,}} and the signal noise ratio (SNR) is set to $-10$ dB. Three paths have different delays which can be given by $l=[0,4,8]$. We employ the Reed–Solomon code RS(31,17) as ECC.
Figure \ref{fg:SAT_AFDM_PTSim} and Fig. \ref{fg:NoSAT_AFDM_PTSim} reveal the packet throughput performance of anti-interference AFDM system under stationary and non-stationary interference impact, respectively. In these two simulations, AFDM system with the same framework as shown in Fig. \ref{fg:FRAMEWORK} but with fixed $N_{\rm{d}}=16$ is used as benchmark. Meanwhile, `AFDM-F', `AFDM-A', and `PT' refer to AFDM system with fixed parameters, the proposed anti-interference AFDM system, and packet throughput, respectively. 

Figure \ref{fg:SAT_AFDM_PTSim} {{shows}} the packet throughput performances of AFDM-F and AFDM-A {{versus}} interference signal ratio (ISR) under stationary interference impact. TI, SWI with frequency modulation slope differing from that of AFDM subcarrier, BBI and NBI are considered in this simulation. Under different types of interference, AFDM-A outperforms AFDM-F in terms of packet throughput performance.
These two systems exhibit comparable performance at an ISR of $8$ dB, as the parameters of AFDM-F are close to the optimal values under this interference level. 
When ISR exceeds $8$ dB, the packet throughput performance of AFDM-F degrades rapidly. This is because the interference strategies parameters of AFDM-F are not sufficient to suppress high-power interference.
In contrast, the packet throughput of AFDM-A decreases more gradually with increasing ISR, owing to its adaptive parameters that enable it to handle interference with varying power levels.

Figure \ref{fg:NoSAT_AFDM_PTSim} {{shows}} the packet throughput performances of AFDM-F and AFDM-A {{versus}} ISR under non-stationary interference impact. This simulation considers SWI with a frequency modulation slope aligned with that of the AFDM subcarriers. 
The packet throughput of AFDM-F degrades from $1750.18$ packets/s to $1749.64$ packets/s as ISR increases. Whereas the packet throughput of AFDM-F degrades from $2507.12$ packets/s to $2492.64$ packets/s as ISR increases. This is because the errors caused by the non-stationary interference impact, concentrating on a specific subcarrier, could be corrected by ECC. As a result, the effect of interference power on packet throughput is relatively limited. Additionally, appended graphics in Fig. \ref{fg:SAT_AFDM_PTSim} and Fig. \ref{fg:NoSAT_AFDM_PTSim} validate the derivation of packet throughput. 
\begin{figure}
	\centering
	\includegraphics[width=2.2in]{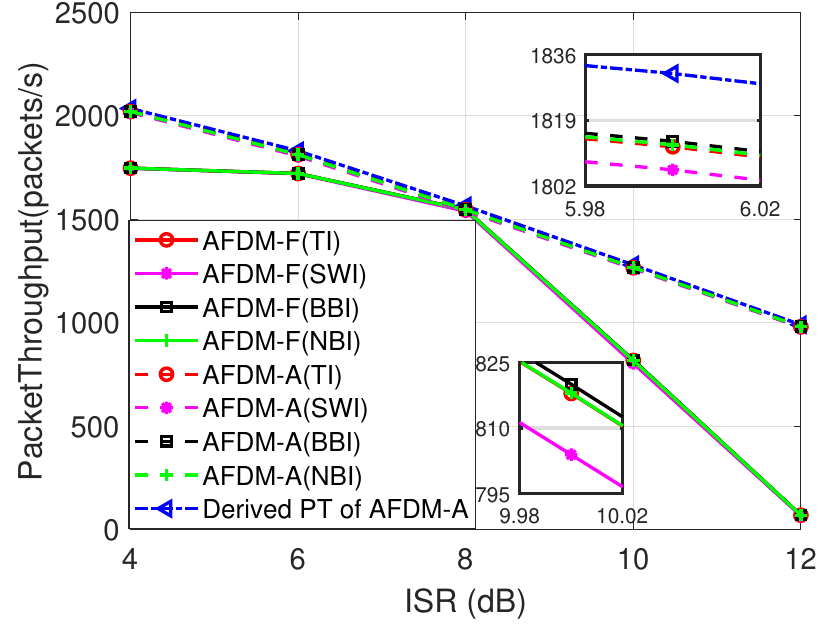}
	\vspace*{-8pt} 
	\caption{{{Packet throughput versus ISR under stationary interference impact.}}
		\label{fg:SAT_AFDM_PTSim}} 
	\vspace*{-10pt} 
\end{figure}

\begin{figure}
	\centering
	\includegraphics[width=2.2in]{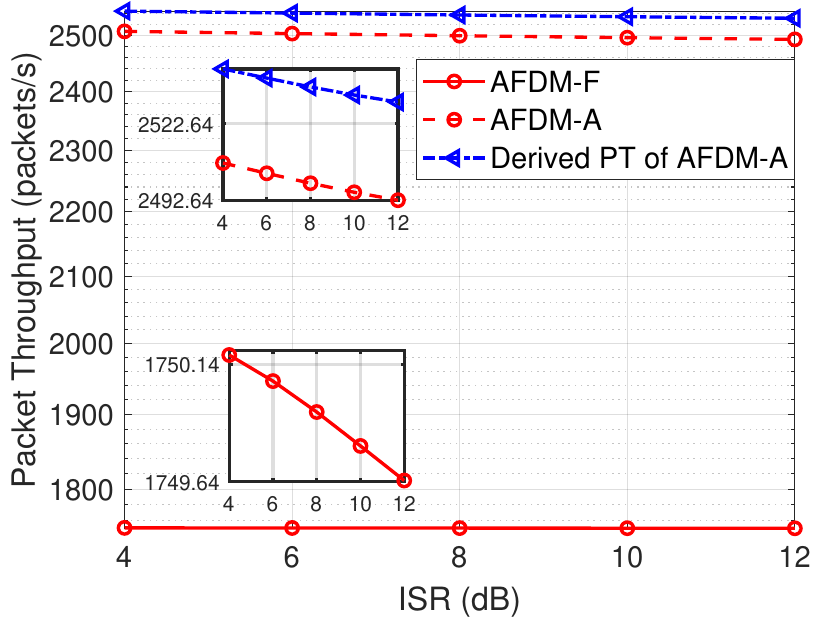}
	\vspace*{-8pt} 
	\caption{Packet throughput versus ISR under non-stationary interference impact.  
		\label{fg:NoSAT_AFDM_PTSim}} 
	\vspace*{-12pt} 
\end{figure}

As shown in Fig. \ref{fg:AFDMvsOTFSvsOFDMSim}, we compare the packet throughput performances of OTFS, OFDM, and AFDM systems under BBI. Similar to Fig. \ref{fg:SAT_AFDM_PTSim}, OTFS, OFDM, and AFDM systems with the same framework as shown in Fig. \ref{fg:FRAMEWORK} but with fixed parameters are considered. `OTFS-F' and `OFDM-F' refer to OTFS and OFDM systems with fixed parameters, respectively. When ISR exceeds $8$ dB, the packet throughput of both AFDM-F and OTFS-F degrades from $1543.46$ packets/s to $68.77$ packets/s. Whereas the packet throughput of OFDM-F degrades from $1471.51$ packets/s to $42.97$ packets/s when ISR exceeds $8$ dB. The inferior packet throughput performance of OFDM-F is due to limited diversity gain of OFDM. The packet throughput of AFDM-A degrades from $1547.23$ packets/s to $981.34$ packets/s when ISR exceeds $8$ dB. This is owing to the capability of AFDM-A to suppress interference with varying power levels.
\begin{figure}
	\centering
	\includegraphics[width=2.2in]{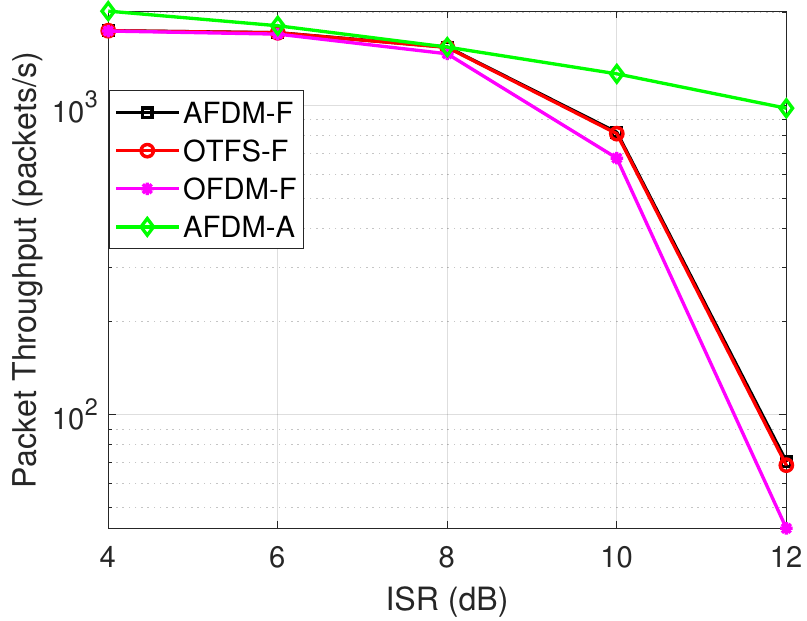}
	\vspace*{-8pt} 
	\caption{ Packet throughput versus ISR for OTFS, OFDM and AFDM.
		\label{fg:AFDMvsOTFSvsOFDMSim}} 
	\vspace*{-10pt} 
\end{figure}

\begin{figure}
	\centering
	\includegraphics[width=2.2in]{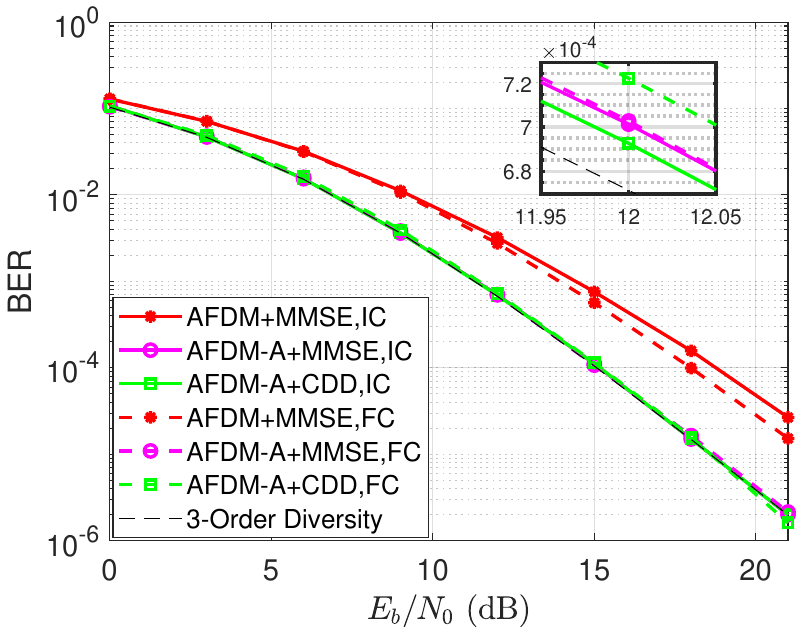}
	\vspace*{-8pt} 
	\caption{ BER performance comparison between AFDM and AFDM-A systems using different detectors.
		\label{fg:BERvsEbN0}} 
	\vspace*{-15pt} 
\end{figure}

Finally, we conduct a simulation to evaluate the BER performance of the proposed CDD method. Both integer Doppler and fractional Doppler cases are considered in this simulation, denoted as `IC' and `FC', respectively.
{{The parameter settings for the fractional-Doppler case are kept identical to that used in the packet throughput simulations, whereas for the integer Doppler case, the normalized Doppler values $k_{i}$ are set to $\left[2,1,0\right]$. In addition, $k_{v} = 7$ is used for CDD.}}
Figure \ref{fg:BERvsEbN0} shows BER of AFDM and AFDM-A versus energy per bit to noise power spectral density ratio ($E_b/N_0$). Note that ``3-Order Diversity'' represents the optimal diversity order of the $3$-path channel.
We observe that AFDM-A outperforms AFDM in both IC and FC cases. This is because MMSE equalization does not introduce inter-path interference for AFDM-A, as analyzed in Section V. 
In both IC and FC cases, the BER of the AFDM-A system using CDD and MMSE detectors closely approaches the maximum diversity of the $3$-path channel. This confirms that CDD is capable of achieving the maximum diversity order with linear computational complexity. 

\section{Conclusion}

This paper proposed an anti-interference AFDM system to ensure reliability and resource efficiency under malicious interference in high-mobility scenarios.
We derived closed-form expressions for interferences in the DAFT domain to analyze interference impact in the DAFT domain. 
Based on the analyses, we derive the analytical relationship between packet throughput and system {{parameters}}, and subsequently design a parameter optimization algorithm.
Finally, a linear-complexity symbol detection method capable of achieving full diversity gain was presented for the proposed anti-interference AFDM system.
Numerical results validated the accuracy of the derived closed-form expressions and verified that the proposed anti-interference AFDM system could achieve high packet throughput under interference in high-mobility scenarios.

\section*{APPENDIX}

\appendices

\section{Proof of Proposition 1}\label{proof_Corollary1}
\vspace*{-1pt} 
\small
{{First, we consider the case $N_{\rm{i}}=1$, corresponding to single-tone interference.}} Based on Lemma 1, we have
\begin{equation}
\vspace*{-2pt} 
{
	\label{eq:Lf}
	L_{\rm{f}}^{\rm{A}}\left( u \right) = {e^{j\frac{a}{{2b}}{u^2}}}\sqrt{\frac{{{P_{\rm{i}}}}}{{ d }}}{e^{ - j\left(\frac{\pi }{4}-\theta_{\rm{i}}\right)}}{e^{j\frac{{{p^2}\left( {{f_{\rm{m}}},u} \right)}}{{2bd}}}}g\left( {\frac{{p\left( {{f_{\rm{m}}},u} \right)}}{{dT}}} \right)\text{,} }
\vspace*{-2pt} 
\end{equation}
where $p\left( {{f_{\rm{m}}},u} \right) = 2\pi b {f_{\rm{m}}} - u$. By utilizing the relationship between Fourier series and the Fourier transform, ${F_{\rm{q}}}\left( {\frac{u}{b}} \right)$ can be given by   
\begin{equation}
\vspace*{-2pt} 
{
	\label{eq:Fq}
	{F_{\rm{q}}}\left( {\frac{u}{b}} \right) = 2\pi b{f_{\rm{s}}}\sum\limits_{k =  - \infty }^\infty  {\delta \left( {u - 2\pi kb{f_{\rm{s}}}} \right)}\text{.} }
\vspace*{-2pt} 
\end{equation}
Substituting (\ref{eq:Lf}) and (\ref{eq:Fq}) into $\left[ {L_{\rm{f}}^{\rm{A}}\left( u \right){e^{j\frac{a}{{2b}}{u^2}}}} \right] * {F_{\rm{q}}}\left( {u/b} \right)$, we get (\ref{eq:STEP1Lemma2}) on the next page. 
\begin{figure*}[htbp]
	\vspace*{-10pt} 
	\small
	\begin{flalign}\label{eq:STEP1Lemma2}
	\left[ {L_{\rm{t}}^{\rm{A}}\left( u \right){e^{j\frac{a}{{2b}}{u^2}}}} \right] * {F_{\rm{q}}}\left( {u/b} \right) = 2\pi b{f_{\rm{s}}}\sqrt{\frac{{{P_{\rm{i}}}}}{{ d }}}{e^{ - j\left(\frac{\pi }{4}-\theta_{\rm{i}}\right)}}
	\sum\limits_{k =  - \infty }^\infty  {{e^{\frac{{j{f_{\rm{s}}}^2}}{{2bd}}{p^2}\left( {{f_{\rm{m}}}/{f_{\rm{s}}},u/{f_{\rm{s}}} - 2\pi kb} \right)}}g\left( {\frac{{p\left( {{f_{\rm{m}}}/{f_{\rm{s}}},u/{f_{\rm{s}}} - 2\pi kb} \right){f_{\rm{s}}}}}{{dT}}} \right)}. 
	\end{flalign}\normalsize
	\vspace*{-10pt} 
\end{figure*}
Based on (\ref{eq:REC}), $k$ is constrained by
\begin{equation}
\vspace*{-2pt} 
{
	\label{eq:kRange}
	\left\lceil {\frac{u}{{2\pi b{f_{\rm{s}}}}} - \frac{{{f_{\rm{m}}}}}{{{f_{\rm{s}}}}}} \right\rceil  \le k \le \left\lceil {\frac{u}{{2\pi b{f_{\rm{s}}}}} - \frac{{{f_{\rm{m}}}}}{{{f_{\rm{s}}}}}} \right\rceil  + 2N{c_1} - 1 \text{.} }
\vspace*{-2pt} 
\end{equation}
Let ${{{f_{\rm{m}}}}}/{{{f_{\rm{s}}}}} = {a_{\rm{f}}} + {\alpha_{\rm{f}}}$, where $a_{\rm{f}}$ is the integer part whereas $\alpha_{\rm{f}}$ is the fractional part satisfying $0 \le {\alpha_{\rm{f}}} < 1$. And we define $\alpha  = N{\alpha_{\rm{f}}}$. Since $N$ is generally taken as a large value in AFDM, the fractional part of $\alpha$ can be neglected in its impact on $\frac{{{f_{\rm{m}}}}}{{{f_{\rm{s}}}}}$, allowing $\alpha$ to be treated as an integer satisfying $0 \le \alpha  < N$.
Considering that $u=m\Delta u$, (\ref{eq:kRange}) can be rewritten as
\begin{align}\label{eq:kRange2}
\vspace*{-2pt} 
k \in \left\{ {\begin{array}{*{20}{c}}
	{\left[ {1 - a_{\rm{f}},2N{c_1} - a_{\rm{f}}} \right],\alpha  < m \le N + \alpha }\\
	{\left[ { - a_{\rm{f}},2N{c_1} - a_{\rm{f}} - 1} \right],\alpha  - N < m \le \alpha }
	\end{array}} \right.
\text{.}
\vspace*{-2pt} 
\end{align}
When $\alpha  < m \le N + \alpha$, let ${{k^{'}} = k - 1 + a_{\rm{f}}}$, we have
\begin{align}\label{eq:p1}
\vspace*{-2pt} 
p\left( {{f_{\rm{m}}}/{f_{\rm{s}}},u/{f_{\rm{s}}} - 2\pi kb} \right)&{=} 2\pi b\left( {{k^{'}} + \frac{{\alpha  - m + N}}{N}} \right) \nonumber\\
&{=}2\pi b\left( {{k^{'}} + \frac{{{{\left\langle {\alpha  - m} \right\rangle }_{N}}}}{N}} \right)
\text{.}
\vspace*{-2pt} 
\end{align}
Whereas $\alpha  - N < m \le \alpha$, let ${{k^{'}} = k+ a_{\rm{f}}}$, we have
\begin{align}\label{eq:p2}
\vspace*{-2pt} 
p\left( {{f_{\rm{m}}}/{f_{\rm{s}}},u/{f_{\rm{s}}} - 2\pi kb} \right)&{=} 2\pi b\left( {{k^{'}} + \frac{{\alpha  - m}}{N}} \right) \nonumber\\
&{=}2\pi b\left( {{k^{'}} + \frac{{{{\left\langle {\alpha  - m} \right\rangle }_{N}}}}{N}} \right)
\text{.}
\vspace*{-2pt} 
\end{align}
Then the interference in the DAFT domain can be calculated through
\begin{equation}
\vspace*{-2pt} 
{
	\label{eq:STI_DAFT2}
	J_{\rm{t}}^{\rm{A}}\left( m \right) = \sqrt {\frac{{2\pi b{}}}{N}}  L_{\rm{st}}^{\rm{A}}\left( u \right){|_{u = m\Delta u}} \text{,} }
\vspace*{-2pt} 
\end{equation}
where $\Delta u = {2\pi  b {f_{\rm{s}}}}/{N}$, $\sqrt {{2\pi b}/{N}}$ is the normalization factor which ensures the conjugate invertibility of the transform. 
By defining
\begin{equation}
\vspace*{-2pt} 
{
	\label{eq:Lnm}
	L\left( {n,m} \right) = \sum\limits_{k = 0}^{\left| n \right| - 1} {{e^{j\frac{\pi }{{Nn}}{{\left( {Nk + m} \right)}^2}}}} \text{,} }
\vspace*{-2pt} 
\end{equation}
(\ref{eq:STI_DAFT2}) can be reformulated in closed-form as
\begin{equation}
\vspace*{-2pt} 
{
	\label{eq:STI_DAFT3}
	J_{\rm{t}}^{\rm{A}}\left( m \right) = \sqrt {\frac{{{P_{\rm{i}}}}}{{2N{c_1}}}} {e^{ - j\left( {2\pi {c_2}{m^2} + \frac{\pi }{4} - {\theta _{\rm{i}}}} \right)}}L\left( {2N{c_1},{{\left\langle {\alpha  - m} \right\rangle }_{N}}} \right)\text{.} }
\vspace*{-0pt} 
\end{equation}

To further analyze (\ref{eq:STI_DAFT3}), we derive the magnitude of $L\left( {n,m} \right)$ as
\begin{align}\label{eq:L2}
\vspace*{-2pt} 
&{{\left| {L\left( {n,m} \right)} \right|^2}} = \sum\limits_{{k_1} = 0}^{\left| n \right| - 1} {{e^{j\frac{\pi }{{Nn}}{{\left( {N{k_1} + m} \right)}^2}}}} \sum\limits_{{k_2} = 0}^{\left| n \right| - 1} {{e^{ - j\frac{\pi }{{Nn}}{{\left( {N{k_2} + m} \right)}^2}}}} \nonumber\\
&{\overset{(a)}{=}} \left| n \right| + \sum\limits_{{k_1} = 0}^{\left| n \right| - 1} {{e^{j\frac{\pi }{{Nn}}{{\left( {N{k_1} + m} \right)}^2}}}} \sum\limits_{{k_2} = 1}^{\left| n \right| - 1} {{e^{ - j\frac{\pi }{{Nn}}{{\left( {N{k_1} + N{k_2} + m} \right)}^2}}}}  \nonumber\\
&{=} \left| n \right|
\text{,}
\vspace*{-2pt} 
\end{align}
where $(a)$ follows from that sequence ${{e^{ - j\frac{\pi }{{Nn}}{{\left( {N{k_2} + m} \right)}^2}}}}$ is periodic with a period of $\left| n \right|$. Then we can get $ {{L\left( {n,m} \right)} } = \sqrt {\left| n \right|} {e^{j{\theta _{\rm{L}}}}}$ where ${\theta _{\rm{L}}}$ is the phase of $L\left( {n,m} \right)$.

Accordingly, when $N_{\rm{i}}=1$, $J_{\rm{t}}^{\rm{A}}\left( m \right)$ is a complex number with amplitude $\sqrt {{P_{\rm{i}}}} $ and phase shift $\left({\theta _{\rm{i}}} + {\theta _{\rm{L}}} - 2\pi {c_2}{m^2} - {\pi }/{4}\right) $. Because the initial phase $\theta_{\rm{i}}$, is independent of $\left({\theta _{\rm{L}}} - 2\pi {c_2}{m^2} - {\pi }/{4}\right) $ and is uniformly distributed within $\left[ {-\pi,\pi } \right]$, the phase of $J_{\rm{t}}^{\rm{A}}\left( m \right)$ could be considered to follow a uniform distribution within $\left[ {-\pi,\pi } \right]$.

{{Then we consider the case $N_{\rm{i}}>1$, corresponding to multiple-tone interferenceinterference. The resulting DAFT domain interference can be regarded as the superposition of the DAFT domain contributions of multiple independent single-tone components. In the signal model given in (\ref{eq:STI}), the phase of each tone is independent and uniformly distributed within $\left[ {-\pi,\pi } \right]$. Therefore, $J_{\rm{t}}^{\rm{A}}\left( m \right)$ for $N_{\rm{i}}>1$ is the sum of $N_{\rm{i}}$ complex random variables with identical amplitude and mutually independent random phases. Based on the central limit theorem, $J_{\rm{t}}^{\rm{A}}\left( m \right)$ converges in distribution to a complex centered Gaussian distribution for sufficiently large $N_{\rm{i}}$. The mean and variance are given by ${\mathbb{E}}\left\{ {J_{\rm{t}}^{\rm{A}}\left( m \right)} \right\} = 0,{\mathbb{V}}\left\{ {J_{\rm{t}}^{\rm{A}}\left( m \right)} \right\} = P_{\rm{i}}$, i.e., $J_{\rm{t}}^{\rm{A}}\left( m \right) \sim \mathcal {CN}\left( {0\text{,} P_{\rm{i}}} \right)$.}}

Proposition 1 is proved.

\section{Proof of Proposition 2}\label{proof_Corollary2}
\vspace*{-1pt} 
\small
Based on the AFT convolution theorem, ${L_{\rm{sw}}}\left( u \right)$ can be given by
\begin{equation}
\vspace*{-2pt} 
{
	\label{eq:CT_SWI}
	{L_{\rm{sw}}}\left( u \right) = \frac{{{e^{ - j\frac{a}{{2b}}{u^2}}}}}{{2\pi b}}\left\{ {\left[ {L_{\rm{w}}^{\rm{A}}\left( u \right){e^{j\frac{a}{{2b}}{u^2}}}} \right] * {F_{\rm{q}}}\left( {u/b} \right)} \right\}\text{,} }
\vspace*{-2pt} 
\end{equation}
where ${L_{\rm{w}}^{\rm{A}}\left( u \right)}$ is the AFT of $f_{\rm{sw}}\left( t \right)$. By employing the stationary phase principle, we have a closed-from of ${L_{\rm{w}}^{\rm{A}}\left( u \right)}$ as (\ref{eq:LCT_SWI}) on the next page, where $C.1$ denotes the constraint that ${{\varphi _{\rm{i}}} = \frac{d}{{2\pi b}}}$ and ${{f_{\rm{m}}} = \frac{u}{{2\pi b}}}$.
\begin{figure*}[htbp]
	\vspace*{-10pt} 
	\small
	\begin{flalign}\label{eq:LCT_SWI}
	\vspace*{-2pt} 
	{L_{\rm{w}}^{\rm{A}}}\left( u \right)  =
	\begin {cases}
	{\sqrt {\frac{j}{{2\pi b{\varphi _{\rm{i}}} - d}}} {e^{ - \frac{j}{{2b}}\left( {a{u^2} + \frac{{{{\left( {2\pi b{f_{\rm{m}}} - u} \right)}^2}}}{{2\pi b{\varphi _{\rm{i}}} - d}}} \right)}}g\left( {\frac{{p\left( {{f_{\rm{m}}},u} \right)}}{{\left( {d - 2\pi b{\varphi _{\rm{i}}}} \right)T}}} \right)}\quad &{{\varphi _{\rm{i}}}} { \ne \frac{d}{{2\pi b}}}\\
	{\frac{T}{{\sqrt {2\pi b} }}{e^{ - j\frac{a}{{2b}}{u^2}}}} \quad &{C.1} \\
	{\frac{{ - jb}}{{\sqrt {2\pi b} }}{e^{ - j\frac{a}{{2b}}{u^2}}}\frac{{{e^{j\left( {2\pi b{f_{\rm{m}}} - u} \right)\frac{T}{b}}} - 1}}{{p\left( {{f_{\rm{m}}},u} \right)}}} \quad &{otherwise} \\
	\end{cases}.
	\vspace*{-2pt} 
	\end{flalign}
	\vspace*{-10pt} 
\end{figure*}
Similar to Appendix \ref{proof_Corollary1}, we have ${{{f_{\rm{m}}}}}/{{{f_{\rm{s}}}}} = a + {\alpha }/{N}$ where $\alpha={N{{\left\langle {{{{f_{\rm{m}}}}}/{{{f_{\rm{s}}}}}} \right\rangle }_1}}$ is treated as an integer satisfying $0 \le \alpha  < N$. Based on (\ref{eq:LCT_SWI}) and (\ref{eq:Fq}), $\left[ {{L_{\rm{w}}}\left( u \right){e^{j\frac{a}{{2b}}{u^2}}}} \right] * {F_{\rm{q}}}\left( {\frac{u}{b}} \right){|_{u = m\Delta u}}$ in (\ref{eq:SWI_DAFT2}) could be written as (\ref{eq:STEP2Lemma3}) on the next page, where $L\left(n,m\right)$ is defined as (\ref{eq:Lnm}).
\begin{figure*}[htbp]
	\vspace*{-10pt} 
	\small
	\begin{flalign}\label{eq:STEP2Lemma3}
	\vspace*{-2pt} 
	\left[ {{L_{\rm{w}}}\left( u \right){e^{j\frac{a}{{2b}}{u^2}}}} \right] * {F_{\rm{q}}}\left( {\frac{u}{b}} \right){|_{u = m\Delta u}} =
	\begin {cases}
	{\sqrt {\frac{{j2\pi bN}}{{{N_{\rm{s}}} - 2N{c_1}}}} L\left( {2N{c_1} - {N_{\rm{s}}},{{\left\langle {{N_{\rm{s}}} - m} \right\rangle }_N}} \right)} &{{\varphi _{\rm{i}}}} { \ne \frac{d}{{2\pi b}}}\\
	{N\sqrt {2\pi b} \delta \left( {m - \alpha } \right)} \quad &{{\varphi _{\rm{i}}}} { = \frac{d}{{2\pi b}}} \\
	\end{cases}.
	\vspace*{-2pt} 
	\end{flalign}
	\vspace*{-10pt} 
\end{figure*}

Thus, we can reformulate (\ref{eq:SWI_DAFT2}) in closed-form as (\ref{eq:SWI_DAFT3}).
\begin{figure*}[htbp]
	\vspace*{-10pt} 
	\small
	\begin{flalign}\label{eq:SWI_DAFT3}
	\vspace*{-2pt} 
	J_{{\rm{sw}}}^{\rm{A}}\left( m \right)  =
	\begin {cases}
	{\sqrt {\frac{{j{P_{\rm{i}}}}}{{{N_{\rm{s}}} - 2N{c_1}}}} {e^{ - j\left( {2\pi {c_2}{m^2} - {\theta _{\rm{i}}}} \right)}}L\left( {2N{c_1} - {N_{\rm{s}}},{{\left\langle {{N_{\rm{s}}} - m} \right\rangle }_N}} \right)} &{{\varphi _{\rm{i}}}} { \ne \frac{d}{{2\pi b}}}\\
	{\sqrt {{NP_{\rm{i}}}} {e^{ - j\left(2\pi{c_2}{\alpha ^2}-{\theta _{\rm{i}}}\right)}}\delta \left( {m - \alpha } \right)} \quad &{{\varphi _{\rm{i}}}} { = \frac{d}{{2\pi b}}} \\
	\end{cases}.
	\vspace*{-2pt} 
	\end{flalign}
	\hrule
	\vspace*{-10pt} 
\end{figure*}
When ${{\varphi _{\rm{i}}} > \frac{d}{{2\pi b}}}$, $J_{\rm{sw}}^{\rm{A}}\left( m \right)$ is with amplitude $\sqrt {{P_{\rm{i}}}} $ and phase shift $\left({\theta _{\rm{i}}} + {\theta _{\rm{L}}} - 2\pi {c_2}{m^2} + {\pi }/{4}\right)$. Whereas ${{\varphi _{\rm{i}}} <\frac{d}{{2\pi b}}}$, $J_{\rm{sw}}^{\rm{A}}\left( m \right)$ is with amplitude $\sqrt {{P_{\rm{i}}}} $ and phase shift $\left({\theta _{\rm{i}}} + {\theta _{\rm{L}}} - 2\pi {c_2}{m^2} - {\pi }/{4}\right)$.
Moreover, phase of $J_{\rm{sw}}^{\rm{A}}\left( m \right)$ is ${\theta _{\rm{i}}}  - 2\pi {c_2}{m^2} $ when ${{\varphi _{\rm{i}}} = \frac{d}{{2\pi b}}}$. Since $\theta_{\rm{i}}$, is uniformly distributed within $\left[ {-\pi,\pi } \right]$, phase of $J_{\rm{sw}}^{\rm{A}}\left( m \right)$ could be considered to follow a uniform distribution within $\left[ {-\pi,\pi } \right]$.

Proposition 2 is proved.

\section{Proof of Proposition 4}\label{proof_Corollary4}
\vspace*{-20pt} 
\small
\begin{equation}
\vspace*{-2pt} 
{
	\label{eq:E_first}
	\mathbb{E}\left\{ {J_{\rm{nb}}^{\rm{A}}\left( m \right)} \right\} =K\left( {m,{\theta _{\rm{i}}}} \right)\sum\limits_{n = 0}^{N - 1} {\mathbb{E}\left\{ {J_{\rm{nb}}^i\left( n \right)} \right\}} A\left( {m,n,{f_{\rm{d}}}} \right) \text{,} }
\vspace*{-2pt} 
\end{equation}
where $K\left( {m,{\theta _{\rm{i}}}} \right) = \sqrt {{{{P_{\rm{i}}}}}/{N}} {e^{ - j\left( {2\pi {c_2}{m^2} + {\theta _{\rm{i}}}} \right)}}$, $A\left( {m,n,{f_{\rm{d}}}} \right) = {e^{ - j2\pi \left( {\frac{{mn}}{N} + {c_1}{n^2} - {f_{\rm{d}}}n} \right)}}$. When $i=1$, the expectation term on the right-hand side of (\ref{eq:E_first}) can be reformulated as
\begin{equation}
\vspace*{-2pt} 
{
	\label{eq:E_second1}
	{\mathbb{E}\left\{ {J_{\rm{nb}}^1\left( n \right)} \right\}} =\sum\limits_{k =  - \infty }^\infty  {\mathbb{E}\left\{ {z\left( {n - k} \right)} \right\}h\left( {{B_{\rm{i}}},k} \right) = 0}  \text{,} }
\vspace*{-2pt} 
\end{equation}
where ${\mathbb{E}\left\{ {z\left( n \right)} \right\}}=0$. While ${\mathbb{E}\left\{ {J_{\rm{nb}}^2\left( n \right)} \right\}}$ could be given by
\begin{equation}
\vspace*{-2pt} 
{
	\label{eq:E_second2}
	{\mathbb{E}\left\{ {J_{\rm{nb}}^2\left( n \right)} \right\}} =\sum\limits_{p =  0 }^\infty  {\mathbb{E}\left\{ {a\left( {p} \right)} \right\}g\left( n-pR_{\rm{u}} \right) = 0}  \text{,} }
\vspace*{-2pt} 
\end{equation}
where ${\mathbb{E}\left\{ {a\left( p \right)} \right\}}=0$. Thus, $\mathbb{E}\left\{ {J_{\rm{nb}}^{\rm{A}}\left( m \right)} \right\}=0$.

Additionally,
\begin{align}\label{eq:PowerNBI}
\vspace*{-2pt} 
{\left| {J_{\rm{nb}}^{\rm{A}}\left( m \right)} \right|^2} = \frac{{{P_{\rm{i}}}}}{N}&{\cdot}\sum\limits_{{n_1} = 0}^{N - 1} {\left\{ {{{\left| {J_{\rm{nb}}^i\left( {{n_1}} \right)} \right|}^2} + \sum\limits_{{n_2} \ne {n_1}} {C\left( {{n_1},{n_2}} \right)} } \right\}} \nonumber\\
= \frac{{{P_{\rm{i}}}}}{N}&{\cdot}\left(S_1+S_2\right)
\text{,}
\vspace*{-2pt} 
\end{align}
where $C\left( {{n_1},{n_2}} \right) = J_{\rm{nb}}^i\left( {{n_1}} \right){\left[ {J_{\rm{nb}}^i\left( {{n_2}} \right)} \right]^*}{\Delta _a}\left( {{n_1},{n_2}} \right)$, ${\Delta _a}\left( {{n_1},{n_2}} \right) = {e^{j2\pi \left( {\frac{{m\left( {{n_2} - {n_1}} \right)}}{N} + {c_1}\left( {{n_2}^2 - {n_1}^2} \right) - {f_{\rm{d}}}\left( {{n_2} - {n_1}} \right)} \right)}}$,
\begin{equation}
\vspace*{-2pt} 
{
	\label{eq:S1}
	{S_1} = \sum\limits_{{n_1} = 0}^{N - 1} {{{\left| {J_{\rm{nb}}^i\left( {{n_1}} \right)} \right|}^2}} \text{,} }
\vspace*{-2pt} 
\end{equation}
and
\begin{equation}
\vspace*{-2pt} 
{
	\label{eq:S2}
	{S_2} = \sum\limits_{{n_1} = 0}^{N - 1} {\sum\limits_{{n_2} \ne {n_1}} {C\left( {{n_1},{n_2}} \right)} }  \text{.} }
\vspace*{-2pt} 
\end{equation}
The expectation of $S_2$ can be derived as
\begin{align}\label{eq:E_S2}
\vspace*{-2pt} 
\mathbb{E}\left\{ {{S_2}} \right\} = &{\sum\limits_{{n_1} = 0}^{N - 1} }{{e^{ - j2\pi \left( {\frac{{m{n_1}}}{N} + {c_1}n_1^2 - {f_{\rm{d}}}{n_1}} \right)}}} \nonumber\\
&{\sum\limits_{{n_2} \ne {n_1}}} {{e^{j2\pi \left( {\frac{{m{n_2}}}{N} + {c_1}n_2^2 - {f_{\rm{d}}}{n_2}} \right)}}}\mathbb{E}\left\{ {\Delta _{nb}^i\left( {{n_1},{n_2}} \right)} \right\} \nonumber\\
\overset{(a)}{=}&{\sum\limits_{{n_1} = 0}^{N - 1} }{{e^{ - j2\pi \left( {\frac{{m{n_1}}}{N} + {c_1}n_1^2 - {f_{\rm{d}}}{n_1}} \right)}}} \nonumber\\
&{{\sum\limits_{\Delta n = 1}^{N - 1}} {{e^{j2\pi \left( {\frac{{m\left( {n_2} \right)}}{N} + {c_1}{{\left( {n_2} \right)}^2} - {f_{\rm{d}}}\left( {n_2} \right)} \right)}}} }{\cdot}\mathbb{E}\left\{ {\Delta _{nb}^i\left( {0,\Delta n} \right)} \right\}\nonumber\\
{=}&0
\text{,}
\vspace*{-2pt} 
\end{align}
where $\Delta _{nb}^i\left( {{n_1},{n_2}} \right) = J_{\rm{nb}}^i\left( {{n_1}} \right){\left[ {J_{\rm{nb}}^i\left( {{n_2}} \right)} \right]^*}$, $\Delta n=n_2-n_1$, $(a)$ follows from that $\mathbb{E}\left\{ {\Delta _{nb}^i\left( {{n_1},{n_2}} \right)} \right\}$ depends solely on difference between $n_1$ and $n_2$.
When $i=1$, 
\begin{align}\label{eq:E_S1_1}
\vspace*{-2pt} 
\mathbb{E}\left\{ {{S_1}} \right\} {=} N &{\cdot} \mathbb{E}\left\{ {\sum\limits_{{k_1} =  - \infty }^\infty  {{{\left| {z\left( { - {k_1}} \right)} \right|}^2}{{\left| {h\left( {{B_{\rm{i}}},{k_1}} \right)} \right|}^2}} } \right\} + \nonumber\\
N &{\cdot} \mathbb{E}\left\{ {\sum\limits_{{k_1} =  - \infty }^\infty  {\sum\limits_{{k_2} \ne {k_1}} {z\left( { - {k_1}} \right){z^*}\left( { - {k_2}} \right){\Delta _h}\left( {{k_1},{k_2}} \right)} } } \right\}\nonumber\\
{=} N  &{\cdot} 1 + N \cdot 0=N
\text{,}
\vspace*{-2pt} 
\end{align}
where ${\Delta _h}\left( {{k_1},{k_2}} \right) = h\left( {{B_{\rm{i}}},{k_1}} \right){h^*}\left( {{B_{\rm{i}}},{k_2}} \right)$. While $i=2$, we have
\begin{equation}
\vspace*{-2pt} 
{
	\label{eq:E_S1_2}
	\mathbb{E}\left\{ {{S_1}} \right\} = N \cdot \mathbb{E}\left\{ {{{\left| {a\left( {\left\lceil {\frac{{{n_1}}}{{{R_{\rm{u}}}}}} \right\rceil } \right)} \right|}^2}} \right\} = N\text{.} }
\vspace*{-2pt} 
\end{equation}
Thus we have
\begin{equation}
\vspace*{-2pt} 
{
	\label{eq:V_NBI}
	\mathbb{V}\left\{ {J_{\rm{nb}}^{\rm{A}}\left( m \right)} \right\} = \frac{{{P_{\rm{i}}}}}{N}\left( N+0 \right) = {P_{\rm{i}}}\text{.} }
\vspace*{-2pt} 
\end{equation}

Proposition 4 is proved.

\section{Proof of Proposition 5}\label{proof_Corollary5}
Given that AFDM enables full diversity gain in doubly selective channels \cite{FWC,AFDM_TWC}, we formulate the signal after despreading as
\begin{equation}
\vspace*{-2pt} 
{
	\label{eq:receive}
	\hat c\left[ n \right] = {G_{\rm{c}}}c\left( n \right) + I\left( n \right)  \text{,} }
\vspace*{-2pt} 
\end{equation}
where ${G_{\rm{c}}} = {N_{\rm{d}}}\sqrt {{{{h_L}{P_{\rm{s}}}} \mathord{\left/{\vphantom {{{h_L}{P_{\rm{s}}}} {{{\log }_2}\left( {{N_{\rm{m}}}} \right)}}} \right.\kern-\nulldelimiterspace} {{{\log }_2}\left( {{N_{\rm{m}}}} \right)}}}$ indicates effective signal gain obtained after exploiting both the full diversity gain and the spreading gain, ${h_L} = \sum\limits_{i = 0}^{L - 1} {{{\left| {{h_i}} \right|}^2}}$, and 
\begin{equation}
\vspace*{-2pt} 
{
	\label{eq:IN_all}
	I\left( n \right) = \sum\limits_{m = \frac{{n{N_{\rm{d}}}}}{{{{\log }_2}\left( {{N_{\rm{m}}}} \right)}}}^{\frac{{\left( {n + 1} \right){N_{\rm{d}}}}}{{{{\log }_2}\left( {{N_{\rm{m}}}} \right)}} - 1} {{d_{\rm{s}}}\left( {{m^{'}}} \right)\left[ {J\left( m \right) + w\left( m \right)} \right]}    \text{,} }
\vspace*{-2pt} 
\end{equation}
where ${{d_{\rm{s}}}\left( {{m}} \right)}$ is spreading sequence $d\left( {{n}} \right)$ under constellation mapping, ${m^{'}}={\left\langle m \right\rangle _{{N_{\rm{d}}}/{{\log}_2}\left( {{N_{\rm{m}}}} \right)}}$, $ w\left(m\right)\sim \mathcal {CN}\left( {0\text{,} P_{\rm{n}}} \right)$ is additive Gaussian noise in the DAFT domain, ${J}\left( m \right)$ denotes malicious interference in the DAFT domain, which can take the form of ${J_{\rm{st}}^{\rm{A}}}\left( m \right)$, ${J_{\rm{sw}}^{\rm{A}}}\left( m \right)$, ${J_{\rm{bb}}^{\rm{A}}}\left( m \right)$, or ${J_{\rm{nb}}^{\rm{A}}}\left( m \right)$. 
We derive BER under two cases, i.e., stationary interference impact and non-stationary interference impact.

\textit{i) Stationary interference impact in the DAFT domain:} Leveraging an impulse-like autocorrelation function of spreading sequence $d\left(n\right)$, each value of the spreading sequence could be considered as statistically independent. Based on the central limit theorem, Proposition 1, Proposition 2, Proposition 3 and Proposition 4, we have
\begin{equation}
\vspace*{-2pt} 
{
	\label{eq:IN_GA}
	I\left( n \right) \sim \mathcal {CN}\left( {0\text{,} N_{\rm{d}}\left(P_{\rm{i}}+P_{\rm{n}}\right)} \right) \text{.} }
\vspace*{-2pt} 
\end{equation}
Let $\xi  = {{{N_{\rm{d}}}{P_{\rm{s}}}} \mathord{\left/{\vphantom {{{N_{\rm{d}}}{P_{\rm{s}}}} {\left[ {{{\log }_2}\left( {{N_{\rm{m}}}} \right)\left( {{P_{\rm{i}}} + {P_{\rm{n}}}} \right)} \right]}}} \right.\kern-\nulldelimiterspace} {\left[ {{{\log }_2}\left( {{N_{\rm{m}}}} \right)\left( {{P_{\rm{i}}} + {P_{\rm{n}}}} \right)} \right]}}$, $P_{\rm{e}}$ given $h_L$ can be expressed as $Q\left( {\sqrt {2{h_L}\xi } } \right)$. And $h_L$ follows a Gamma distribution \cite{AFDM_TWC}, i.e., 
\begin{equation}
\vspace*{-2pt} 
{
	\label{eq:h_L}
	{h_L} \sim \Gamma\left( {L,\frac{1}{L}} \right) \text{.} }
\vspace*{-2pt} 
\end{equation}
Following \cite[Eq. 3-37]{FWC}, $P_{\rm{e,s}}$ could be given by
\begin{align}\label{eq:Pe}
\vspace*{-2pt} 
{P_{\rm{e,s}}} =\frac{1}{2} - \frac{1}{2}\sum\limits_{i = 0}^{L - 1} { \binom{2i}{i}} {\left( {\frac{{{\gamma _{\rm{in}}}}}{{4{N_{\rm{d}}}}}} \right)^i}{\left( {\frac{{{N_{\rm{d}}}}}{{{N_{\rm{d}}} + {\gamma _{\rm{in}}}}}} \right)^{i + 0.5}}
\text{,}
\vspace*{-2pt} 
\end{align}
where ${\gamma _{\rm{in}}} = {{L{{\log }_2}\left( {{N_{\rm{m}}}} \right)\left( {{P_{\rm{n}}} + {P_{\rm{i}}}} \right)} \mathord{\left/{\vphantom {{L{{\log }_2}\left( {{N_{\rm{m}}}} \right)\left( {{P_{\rm{n}}} + {P_{\rm{i}}}} \right)} {{P_{\rm{s}}}}}} \right.\kern-\nulldelimiterspace} {{P_{\rm{s}}}}}$.

\textit{ii) Non-stationary interference impact in the DAFT domain:} If the spreading sequence associated with $\hat c\left[ n \right]$ contains no interfered chips, the interference term $I\left(n\right)$ is dominated by $w\left(m\right)$. Conversely, when the spreading sequence corresponding to $\hat c\left[ n \right]$ includes interfered chips, the interference term $I\left(n\right)$ follows a Gaussian distribution, simliar to (\ref{eq:IN_GA}). Thus, we have
\begin{align}\label{eq:IN_SW}
\vspace*{-2pt} 
\begin {cases}
I\left( n \right) \sim \mathcal {CN}\left( {0\text{,} {N_{\rm{d}}}P_{\rm{n}}} \right) \quad &{\alpha  \notin {\mathbb{T}}\left(n\right)} \\
I\left( n \right) \sim \mathcal {CN}\left( {0\text{,} \frac{{{N_{\rm{d}}}^2{P_{\rm{n}}} + {N}{P_{\rm{i}}}}}{{{N_{\rm{d}}}}}} \right)  \quad &{\alpha   \in  {\mathbb{T}}\left(n\right)} \\
\end{cases},
\vspace*{-2pt} 
\end{align}
where ${\mathbb{T}\left(n\right)}=\left[ {{{\left\langle {n{N_r}} \right\rangle }_N},} \right.\left. {{{\left\langle {\left( {n + 1} \right){N_r}} \right\rangle }_N}} \right)$ and ${N_r} = {{{N_{\rm{d}}}} \mathord{\left/
		{\vphantom {{{N_{\rm{d}}}} {{{\log }_2}\left( {{N_{\rm{m}}}} \right)}}} \right.
		\kern-\nulldelimiterspace} {{{\log }_2}\left( {{N_{\rm{m}}}} \right)}}$. Similar to (\ref{eq:Pe}), $P_{\rm{e}}$ could be given by
\begin{align}\label{eq:Pen}
\vspace*{-2pt} 
{{P_{\rm{e,n}}}\left( n \right)}
\begin {cases}
\frac{1}{2} - \frac{1}{2}\Psi_{1} \left( {{N_{\rm{d}}}} \right) \quad &{\alpha  \notin {\mathbb{T}}\left(n\right)} \\
\frac{1}{2} - \frac{1}{2}\Psi_{2} \left( {{N_{\rm{d}}}} \right)   \quad &{\alpha   \in  {\mathbb{T}}\left(n\right)} \\
\end{cases},
\vspace*{-2pt} 
\end{align}	
where
${\gamma _{\rm{n}}} = {{L{{\log }_2}\left( {{N_{\rm{m}}}} \right){P_{\rm{n}}}} \mathord{\left/
		{\vphantom {{L{{\log }_2}\left( {{N_{\rm{m}}}} \right){P_{\rm{n}}}} {{P_{\rm{s}}}}}} \right.
		\kern-\nulldelimiterspace} {{P_{\rm{s}}}}}$, ${\gamma _{\rm{m}}} = {\gamma _{\rm{i}}} + {\gamma _{\rm{n}}}{N_{\rm{d}}}^2$ and
${\gamma _{\rm{i}}} = {{{N}L{{\log }_2}\left( {{N_{\rm{m}}}} \right){P_{\rm{i}}}} \mathord{\left/
		{\vphantom {{{N}L{{\log }_2}\left( {{N_{\rm{m}}}} \right){P_{\rm{i}}}} {{P_{\rm{s}}}}}} \right.
		\kern-\nulldelimiterspace} {{P_{\rm{s}}}}}$.
From the definition of ${\mathbb{T}\left(n\right)}$, ${P_{\rm{e,n}}} = \frac{1}{2} - \frac{1}{2}\left[R{\Psi _2}\left( {{N_{\rm{d}}}} \right) + \left( {1 - R} \right){\Psi _1}\left( {{N_{\rm{d}}}} \right)\right]$ and $R={{{{N_{\rm{d}}}} \mathord{\left/
			{\vphantom {{{N_{\rm{d}}}} {\left[ {N{{\log }_2}\left( {{N_{\rm{m}}}} \right)} \right]}}} \right.
			\kern-\nulldelimiterspace} {\left[ {N{{\log }_2}\left( {{N_{\rm{m}}}} \right)} \right]}}}$.

Proposition 5 is proved.

\small
\bibliographystyle{IEEEbib}
\bibliography{IEEEabrv,IEEE_JRCJ_ref}

\end{document}